\title{Randomly Generated Subgroups of the Symmetric Group and Random Lifts of Graphs}
\author{Shashwat Silas \\ \sl{University of Cambridge, UK}}
\begin{document}
\onehalfspacing
\newtheorem{thm}{Theorem}[section]
\newtheorem{defi}[thm]{Definition}
\newtheorem{lem}[thm]{Lemma}
\newtheorem{ex}[thm]{Example}
\newtheorem{cor}[thm]{Corollary}
\newtheorem{prop}[thm]{Proposition}
\newtheorem{fact}[thm]{Fact}
\newtheorem{claim}[thm]{Claim}
\newtheorem{conj}[thm]{Conjecture}
\date{\small{\today}}
\maketitle

\begin{abstract}
	Amit and Linial showed that a random lift of a graph with minimum degree $\delta\ge3$ is asymptotically almost surely $\delta$-connected, and mentioned the problem of estimating this probability as a function of the degree of the lift. We relate a randomly generated subgroup of the symmetric group on $n$ elements to random $n$-lifts of a graph and use it to provide such an estimate along with related results. We also improve their later result showing a lower bound on the edge expansion on random lifts. Our proofs rely on new ideas from group theory which make several improvements possible. We exactly calculate the probability that a random lift of a connected graph with Betti number $l$ is connected by showing that it is equal to the probability that a subgroup of the symmetric group generated by $l$ random elements is transitive. We also calculate the probability that a subgroup of a wreath product of symmetric groups generated by $l$ random generators is transitive. We show the existence of homotopy invariants in random covering graphs which reduces some of their properties to those of random regular multigraphs, and in particular makes it possible to compute the exact probability with which random regular multigraphs are connected. All our results about random lifts easily extend to iterated random lifts.

\end{abstract}
\section{Introduction}

Amit and Linial introduced random lifts of graphs in \cite{1} and studied their connectivity properties. Properties of these graphs have been widely researched since their work. Recall that a random $n$-lift, $\tilde{G}$, of a graph $G$ is constructed in the following way: arbitrarily orient the edges of $G$ and assign a permutation from the symmetric group, $\mathcal{S}_n$, to every edge of $G$ uniformly at random. The vertices of $\tilde{G}$ are $(v,i)$ where $i \in [1,n]$ for every $v \in V(G)$ and $(u,i)$ is connected to $(v,j)$ if and only if there is an edge $e$ connecting $u$ to $v$ in $G$, and this edge was assigned a permutation $\pi$ such that $\pi(i)=j$. The edges of $\tilde{G}$ are unoriented. 
	
We call $n$ the degree of the lift and $G$ the base graph. The $n$ vertices $(v,i)$ of $\tilde{G}$ form the fiber of $v \in V(G)$. Similarly, if $e$ connects $u$ to $v$ in $G$, and it is assigned the permutation $\pi$ in the construction of $\tilde{G}$, then the $n$ edges connecting $(u,i)$ to $(v,\pi(i))$ form the fiber of $e$.

If $G$ has parallel edges or loops, then we simply assign each parallel edge or loop a random permutation and construct a random lift in the same way. Lifts of graphs cover the base graph in the sense of covering spaces in topology, and there is a surjective $n$ to $1$ graph homomorphism, or covering homomorphism, from an $n$-lift of $G$ to $G$. One can also show that any walk in $G$ starting at $u$ is covered by $n$ edge-disjoint walks in any $n$-lift of $G$ (formally, each of these is a preimage of the walk in the covering homomorphism), one starting at each point in the fiber of $u$, this is known as the walk-lifting property. 

We study connectivity properties of random lifts and some of our proofs are inspired by the work of \cite{1} and \cite{2}, but we use ideas from group theory. The main theorem in \cite{1} is that asymptotically almost surely (with probability going to $1$ as $n\to \infty$), a random $n$-lift of a simple connected graph with minimum degree $\delta \ge 3$ is $\delta$-connected. Amit and Linial raise the question as to whether this probability can be estimated as a function of $n$ and suggest the study of iterated random lifts. We establish a new relationship between a random $n$-lift and a randomly generated subgroup of $\mathcal{S}_n$, which we call the walk-subgroup, and use properties of this subgroup to provide a solution to their question. We improve a result about the edge expansion of random lifts from \cite{2}, and show a new bound on the probability of $\delta$-connectivity in $n$-lifts of graphs where $\delta$ is not a fixed constant. Our results naturally extend to iterated random lifts. Interestingly, they also show the existence of properties of random lifts whose probability only depends on the homotopy type of the base graph. The main contribution of this work is methodological and we think that techniques similar to the ones in this paper could find further applications in the study of random lifts. 
\section{Outline}
 The necessary preliminaries are presented in Section 3, and the main results are in Section 4. In Section 3.1 we define general walk-subgroups. Section 3.2--3.3 describe random lifts, iterated random lifts and their respective walk-subgroups. Our proof strategy in Section 4 is to show that properties of the walk-subgroup imply properties of random lifts, and in Section 3.4 we mention results about randomly generated subgroups of $\mathcal{S}_n$ which will be useful later. 

In Section 4.1 we prove results pertaining to connectivity and edge expansion of random lifts. Amit and Linial have shown the following.
\begin{thm}[Theorem 1 \cite{1}]
Let $G$ be a simple connected graph with minimum degree $\delta\ge 3$. Then with probability $1-o_n(1)$, a random n-lift of G is $\delta$-connected. 
\end{thm}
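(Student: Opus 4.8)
The plan is to split the argument into two parts: first establish ordinary connectivity of $\tilde{G}$, and then rule out the existence of small vertex separators. For the first part I would invoke the correspondence developed in this paper between connectivity of the lift and transitivity of the walk-subgroup. Since $G$ is connected with minimum degree $\delta \ge 3$, its Betti number $l = |E(G)| - |V(G)| + 1$ satisfies $l \ge \tfrac{1}{2}|V(G)| + 1$, so the walk-subgroup is generated by many independent uniform elements of $\mathcal{S}_n$. The estimates recorded in Section 3.4 show that $l \ge 2$ already forces transitivity, hence connectivity, with probability $1 - o_n(1)$, so connectivity is essentially free.

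The substance is $\delta$-connectivity. I would phrase the bad event as the existence of a partition $V(\tilde{G}) = A \sqcup T \sqcup B$ with $A, B \ne \emptyset$, $|T| \le \delta - 1$, no edge between $A$ and $B$, and $A$ the smaller side, so that $|A| \le n|V(G)|/2$. The decisive observation is that every edge leaving $A$ must terminate in $T$, whence the edge boundary obeys $|\partial_E A| \le \sum_{t \in T}\deg(t) \le (\delta - 1)\Delta$, where $\Delta$ is the maximum degree of $G$; in particular $|\partial_E A| = O(1)$ independently of $n$. I would then encode a candidate $A$ by its fiber profile $(a_v)_{v \in V(G)}$ with $a_v = |A \cap \mathrm{fiber}(v)|$, and estimate, for each profile and each choice of separator $T$, the probability that the random permutations on the edges of $G$ leave so few crossing edges.

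The heart of the argument is a concentration bound for crossing edges. For an edge $e = (u,v)$ carrying a uniform permutation $\pi_e$, the number of indices $i$ with exactly one of $(u,i)$, $(v,\pi_e(i))$ in $A$ is sharply concentrated around $\tfrac{a_u(n - a_v) + a_v(n - a_u)}{n}$, so any profile that is neither close to a union of full fibers nor close to empty produces an edge boundary that grows with $n$ and cannot be absorbed into the constant-sized set $T$. Fiber-respecting profiles, where each $a_v \in \{0, n\}$, are dispatched directly: such an $A$ is a union of fibers over a proper subset $W \subsetneq V(G)$, its edge boundary equals $n \cdot |\partial_G W| \ge n\delta$, which again dwarfs $(\delta - 1)\Delta$. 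Summing the resulting failure probabilities over all profiles and all separators of size at most $\delta - 1$ should give a total of $o_n(1)$.

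I expect the main obstacle to be this union bound over profiles: there are $(n+1)^{|V(G)|}$ of them, and for profiles intermediate between empty and full-fiber the crossing-edge count, though large in expectation, must be shown to be atypically small only with a probability decaying fast enough to beat this count. Controlling that tail uniformly is precisely an edge-expansion statement for random lifts — exactly the quantity sharpened in Section 4.1 — and once the expansion lower bound is in hand, its incompatibility with a separator of size $\le \delta - 1$ is immediate.
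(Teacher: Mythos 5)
First, note that the paper does not actually prove this statement: it is quoted verbatim from Amit and Linial \cite{1}, and the closest the paper comes to it is the quantitative edge-connectivity result of Theorem 4.7 via Proposition 4.9 (which needs $\delta \ge 5$ and $n > (\delta-1)^6|V(G)|^5$). Your connectivity step is fine and matches the paper's machinery (Proposition 4.2 plus Lemma 3.9; even your bound $l \ge \tfrac{1}{2}|V(G)|+1$ is correct, though $l\ge 2$ suffices). The gap is in the second half. Your reduction says: a vertex cut $T$ with $|T|\le \delta-1$ forces the small side $A$ to have edge boundary $|\partial_E A| \le (\delta-1)\Delta = O(1)$, and you then want to contradict this with an edge-expansion lower bound. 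But expansion bounds of the type available (the paper's Theorem 4.5 gives $|\partial_E A| \ge \xi(G)|A|$ with $\xi(G)$ a constant of order $\varepsilon/|V(G)|$) only contradict a constant edge boundary when $|A| > (\delta-1)\Delta/\xi(G)$, i.e.\ for sets larger than a fixed constant. Sets $A$ of bounded size genuinely do have edge boundary $O(\Delta) \le (\delta-1)\Delta$ --- a single vertex already has boundary at most $\Delta$ --- so neither your concentration estimate (whose mean $\tfrac{a_u(n-a_v)+a_v(n-a_u)}{n} \approx a_u + a_v$ is itself $O(1)$ for such profiles, so there is no large deviation to exploit) nor the expansion bound says anything about them. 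Hence your final claim that the incompatibility with a separator of size $\le \delta-1$ is ``immediate'' once expansion is in hand is false precisely in the regime where $\delta$-connectivity can actually fail.

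What is missing is a separate argument that no \emph{small} set $A$ has vertex neighborhood of size $\le \delta-1$. For such an $A$ of size $k$, all $\ge k\delta - 2e(A)$ boundary edges must land in at most $\delta-1$ vertices, each of degree $\le \Delta$, which forces the induced subgraph on $A \cup N(A)$ (at most $k+\delta-1$ vertices) to carry at least $k\delta/2$ edges --- a small subgraph with many excess edges over a spanning tree. Ruling these out with high probability requires a counting argument: each excess edge must be the lift of a non-flat edge whose random permutation sends it to a prescribed target, costing a factor $O(1/n)$ per excess edge, beaten against a union bound over the polynomially many small candidate subgraphs. This is exactly the content of case 4 of the paper's Proposition 4.9 (the ``$1.3k$ excess non-flat edges'' computation) and is also where Amit and Linial's original proof does its real work; it is the one genuinely probabilistic ingredient your proposal omits, and without it the argument does not close.
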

They ask whether this probability can be estimated as a function of $n$. We first compute the probability of connectivity in Theorem 4.1, and then show how to compute a lower bound on the probability of $\delta$-connectivity in Theorem 4.7 provided that $\delta \ge 5$. In Theorem 4.10 we show that if $\delta\ge 5$, random $n$-lifts of all graphs with $k$ vertices are a.a.s. $\delta$-connected even if we assume that $\delta$ grows slowly as a function of $n$ and $k$.

In \cite{2} it is shown that the edge expansion of random lifts can be lower bounded as a function of the base graph in the following way. 

\begin{thm}[Theorem 2.1 \cite{2}]
Let G be a connected graph with $|E| >|V|$. Then there is a positive constant $\xi_0(G)$, such that a.a.s. a random lift of $G$ has expansion at least $\xi_0(G)$.
\end{thm}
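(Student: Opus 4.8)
The plan is to combine the first-moment method with sharp concentration and a union bound over all subsets of $V(\tilde G)$, and to let the hypothesis $|E|>|V|$ (equivalently, Betti number $l\ge 2$) enter precisely at the sparsest cuts. Fix an orientation of $G$, write $k=|V(G)|$, and let $\pi_e\in\mathcal S_n$ be the independent uniform permutations defining the lift $\tilde G$. Given a candidate cut $S\subseteq V(\tilde G)$ with $|S|\le \tfrac12|V(\tilde G)|$, I would record its \emph{profile} $(s_v)_{v\in V(G)}$ with $s_v=|S\cap \mathrm{fib}(v)|/n$, and identify $S_v:=S\cap\mathrm{fib}(v)$ with a subset of $[n]$. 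In the fiber of an oriented edge $e=(u,v)$ the number of cut edges equals $|S_u|+|S_v|-2|S_u\cap\pi_e^{-1}(S_v)|$, so writing $X_e:=|S_u\cap\pi_e^{-1}(S_v)|$,
\[
|\partial S|\;=\;\sum_{e=(u,v)\in E}\bigl(|S_u|+|S_v|\bigr)\;-\;2\sum_{e\in E}X_e .
\]
The structural fact I would lean on is that the $X_e$ are \emph{mutually independent}, since distinct edges carry independent permutations.

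First I would extract the first moment. Since $\mathbb E[X_e]=|S_u||S_v|/n=s_us_vn$ for a uniform permutation,
\[
\mathbb E\,|\partial S|\;=\;\Bigl(\sum_{(u,v)\in E}(s_u+s_v-2s_us_v)\Bigr)n .
\]
I would then prove a purely deterministic lemma: there is $c_1=c_1(G)>0$ with $\sum_{(u,v)\in E}(s_u+s_v-2s_us_v)\ge c_1\sum_v s_v$ for every profile satisfying $\sum_v s_v\le k/2$. Near the origin the quadratic terms are negligible and the left side is $\ge\sum_v\deg(v)\,s_v\ge\delta\sum_v s_v$; on the rest of the polytope $\{s\in[0,1]^V:\sum_v s_v\le k/2\}$ the ratio is a continuous positive function on a compact set, its $0/1$ extreme points reducing to the (positive) edge expansion of the fixed connected graph $G$. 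Hence $\mathbb E|\partial S|\ge c_1|S|$, and the whole game becomes ruling out large downward deviations simultaneously for all $S$.

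The hard part will be the union bound, because a given profile is realized by $\prod_v\binom{n}{s_vn}=\exp\!\bigl((1+o(1))\,n\sum_v H(s_v)\bigr)$ sets, $H$ the binary entropy. Fixing a target $\xi_0=\xi_0(G)<c_1/2$, the event $|\partial S|<\xi_0|S|$ forces $\sum_e X_e$ above its mean by at least $\tfrac12(c_1-\xi_0)|S|$; as the $X_e$ are independent, each supported on $[0,\min(s_u,s_v)n]$ with variance $O(s_us_vn)$, a Bernstein estimate makes this probability exponentially small in $n$. The decisive regime is the \emph{sparse} one, say all $s_v\le\alpha$ with $\alpha$ small. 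There the number of sets is $\exp\!\bigl((1+o(1))\,|V|\,\alpha\log(1/\alpha)\,n\bigr)$, while the event forces each $X_e$ up toward its maximum $\min(s_u,s_v)n$ — far above its mean of order $\alpha^2n$ — which costs, per independent edge, probability $1/\binom{n}{\Theta(\alpha n)}=\exp(-(1+o(1))\alpha\log(1/\alpha)n)$, hence $\exp(-(1+o(1))\,|E|\,\alpha\log(1/\alpha)\,n)$ overall. Multiplying the count by the failure probability yields $\exp\!\bigl((1+o(1))(|V|-|E|)\,\alpha\log(1/\alpha)\,n\bigr)$, which tends to $0$ \textbf{exactly because} $|E|>|V|$. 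This is where the hypothesis is indispensable: when $l=1$ the lift degenerates into a disjoint union of long cycles whose expansion vanishes, and the saving gap $|V|-|E|$ disappears.

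To finish I would patch the regimes together: profiles with some $s_v$ bounded away from $0$ and $1$ are controlled by the plain Bernstein/Chernoff bound, whose $e^{-\Omega(n)}$ rate comfortably beats the entropy $nH(s_v)$; near-full fibers reduce to the sparse case by complementing $S$ within them; and the finitely many boundary scales are absorbed into $\xi_0$. Choosing $\xi_0(G)>0$ small enough in terms of $c_1(G),\delta,|V|,|E|$ then makes the total failure probability $o(1)$, so a.a.s. every cut obeys $|\partial S|\ge\xi_0|S|$. The main obstacle I anticipate is making the sparse large-deviation exponent dominate $\sum_v H(s_v)$ \emph{uniformly} over all small but possibly wildly unbalanced profiles, rather than only the uniform one: the $s_v$ may differ greatly across fibers, so the per-edge deviation cost must be aggregated against the per-vertex entropy through the incidence structure of $G$, and it is exactly this accounting that forces the inequality $|E|>|V|$ to be used quantitatively rather than merely asymptotically. (It is also here that the paper's walk-subgroup picture — connectivity as transitivity of the randomly generated subgroup — suggests the right bookkeeping, since a thin cut corresponds to a near-invariant family of fiber subsets.)
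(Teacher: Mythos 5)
Your overall architecture (exact cut identity in each edge fiber, independence of the $X_e$, first-moment lemma, entropy-versus-large-deviation accounting with the margin $|E|-|V|$) is viable and is genuinely different from the paper's method, but as written it has one concrete quantitative gap: the ``middle'' regime. Note first that for tiny $\xi_0$ the event $|\partial S|<\xi_0|S|$ does not merely push $\sum_e X_e$ a mean-scale amount above its expectation; since $|\partial S|=\sum_e\bigl[(|S_u|-X_e)+(|S_v|-X_e)\bigr]$ with every summand nonnegative, it forces \emph{every} $X_e$ to within a total slack $\xi_0|S|$ of $\max(|S_u|,|S_v|)$ (and in particular forces near-balance $|S_u|\approx|S_v|$ across every edge). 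So at a balanced profile $s_v\approx s$ of constant order, the per-edge event you must price is $X_e\ge (1-o(1))sn$ for a hypergeometric of mean $s^2n$, and this price must be aggregated against the per-vertex entropy $H(s)n$. Plain Bernstein/Hoeffding for the hypergeometric gives an exponent of order at most $s(1-s)n$ (e.g.\ Hoeffding yields $e^{-n/8}$ at $s=1/2$), which is strictly below $H(1/2)=\log 2$ per vertex; the count $e^{|V|H(s)n}$ times the Bernstein bound $e^{-c\,|E|\,s(1-s)n}$ then diverges whenever $|E|/|V|<H(s)/\bigl(c\,s(1-s)\bigr)$ --- for instance for a cycle plus one chord, where $|E|=|V|+1$. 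So the claim that Bernstein ``comfortably beats the entropy'' there would fail for exactly the sparse base graphs the theorem is about. The repair is available inside your own sketch: the top-of-the-range hypergeometric estimate you invoke in the sparse regime, namely $\Pr\bigl[X_e\ge\min(|S_u|,|S_v|)-\epsilon n\bigr]\le e^{-(1-o_\epsilon(1))H(s)n}$ (essentially $1/\binom{n}{sn}$), holds uniformly in $s\in(0,1)$, giving total failure mass $e^{(1+o(1))(|V|-|E|)H(s)n}$ at every scale, with margin $(l-1)H(s)n$ where $l=|E|-|V|+1\ge 2$. With that substitution, plus the deterministic inequality $|\partial S|\ge\sum_e\bigl||S_u|-|S_v|\bigr|$ to dispose of unbalanced profiles (which also simplifies your first-moment lemma), the argument closes.

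For comparison: the paper does not prove this quoted theorem at all --- it cites it from \cite{2} and proves a sharper version (Theorem 4.5 via Proposition 4.6) by an entirely different, group-theoretic route. There one conditions on the single event that the walk-subgroup generated by the $l$ random non-flat permutations contains $\mathcal{A}_n$, hence is at least $(n-2)$-transitive; by the Dixon/Babai-type Theorem 3.12 this has probability $1-\frac{1}{n^{l-1}}+O(\frac{1}{n^l})$, and \emph{given} that event the expansion bound $\xi(G)$ is deterministic: unbalanced fibers yield $\varepsilon m$ boundary edges via $n$ lifted disjoint paths, and balanced fibers yield $n/3$ boundary edges by lifting the walk associated to a permutation moving $n/3$ points of $T_v$ out of $T_v$. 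That route buys an explicit polynomial convergence rate matching the connectivity probability and avoids all union-bound and entropy bookkeeping; your route, once repaired as above, is elementary and self-contained (no classification-of-finite-simple-groups input) but delivers a less explicit rate and a heavier case analysis. The hypothesis $|E|>|V|$ plays the same role in both: in the paper it is $l\ge 2$ random generators (one random permutation is transitive only with probability $1/n$, and even then the lift is a union of cycles with vanishing expansion), while in yours it is the entropy margin $(l-1)H(s)$ --- the same mechanism in different clothing, as your closing parenthetical correctly intuits.
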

We improve this result in Theorem 4.5 by showing that a slightly better lower bound holds and explicitly giving its probability.

We also show how to extend our results to iterated random lifts. The iterated construction of random lifts of random lifts of and so on, has been mentioned in \cite{1} but no properties of this model were known previously.

In Section 4.2 we show the existence of homotopy invariants in random lifts in the following sense: the probability that random lifts of $G$ have certain properties (like connectivity) depends only on the homotopy type of $G$. This result in particular extends the current understanding of how random lifts inherit structure from their base graph.

In Section 4.3 we generalize a result in \cite{4} which calculates the probability with which random elements of $\mathcal{S}_n$ generate a transitive subgroup of $\mathcal{S}_n$, to the wreath product of symmetric groups. This result may be of independent interest.

\section{Preliminaries}
\subsection{The Walk-Subgroup}
\begin{defi}
	Let $H$ be a graph and $\mathcal{G}$ a group. To every edge of $H$ associate an element of $\mathcal{G}$ through a map $V:E(G) \to \mathcal{G}$. We calculate the walk-product of the walk $\{w_1,w_2, \dots, w_n\}$ on H as $V(w_1)V(w_2) \dots V(w_n)$, where if $w_i = w_j^{-1}$ then $V(w_i) = V(w_j)^{-1}$. The subset of $\mathcal{G}$ which can be produced by walk-products is called the walk-subset of (H,V). In special cases, this subset is a subgroup of $\mathcal{G}$, which we will call the walk-subgroup of (H,V).  
\end{defi}

Walk-subsets depend on the assignment $f:E(H)\to \mathcal{G}$ and the graph $H$. For example, if $f$ assigns the identity element to every edge, then for every group $\mathcal{G}$ and graph $H$ the walk-subgroup is trivial. To see the dependence on the structure of $H$, suppose that $H$ is the path graph with group element $g_i$ on edge $i$: the walk-subset consists of the ${n \choose 2}$ elements $\Pi_{k\le i\le j} g_i$ where $1 \le k,j \le n$. 

We have the following theorem about the structure of the walk-subsets for certain special assignments. 

\begin{prop}
	Given a graph H and a spanning tree T, define $f$ to be the assignment which sends every edge in $T$ to the identity element, and any edge not in $T$ to an element $g_i \in \mathcal{G}$. In this case the walk-subgroup is the subgroup of $\mathcal{G}$ generated by the $g_i$.
\end{prop}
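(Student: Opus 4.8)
The plan is to prove two inclusions: that every walk-product under $f$ lies in the subgroup $\langle g_i\rangle$ generated by the cotree labels, and conversely that every element of $\langle g_i\rangle$ arises as a walk-product. Together these identify the walk-subset with $\langle g_i\rangle$, and in particular show that it is a subgroup, justifying the name ``walk-subgroup'' in this special case.

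The first inclusion is immediate from the definition of the walk-product. Any walk-product is a finite product of terms of the form $f(e)^{\pm 1}$, where $e$ ranges over the edges traversed by the walk. Under the given assignment each such term is either the identity (when $e \in T$) or $g_i^{\pm 1}$ (when $e \notin T$), so every walk-product is a product of the $g_i$ and their inverses and hence lies in $\langle g_i\rangle$.

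For the reverse inclusion I would first reduce to closed walks based at a fixed vertex $v_0$. Since $T$ is a spanning tree, every vertex $u$ is joined to $v_0$ by a unique reduced path $\tau_u$ in $T$, whose walk-product is the identity because every edge of $T$ carries the identity. Given any walk $W$ from $u$ to $v$, the concatenation $\tau_u W \tau_v^{-1}$ is a closed walk at $v_0$ with the same walk-product as $W$; thus the walk-subset coincides with the set of walk-products of closed walks based at $v_0$. This latter set contains the identity (empty walk), is closed under multiplication (concatenating two closed walks at $v_0$) and under inversion (reversing a closed walk inverts its walk-product), so it is a subgroup. To see it equals $\langle g_i\rangle$, note that for each cotree edge $e_i$ with endpoints $x_i,y_i$ and label $g_i$, the closed walk that runs from $v_0$ to $x_i$ in $T$, crosses $e_i$, and returns from $y_i$ to $v_0$ in $T$ has walk-product $g_i$; since this subgroup contains each $g_i$, it contains all of $\langle g_i\rangle$. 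Combined with the first inclusion, the walk-subset is exactly $\langle g_i\rangle$.

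I expect the only delicate point to be this reduction from arbitrary walks to closed walks based at $v_0$: the walk-subset ranges over \emph{all} walks, with arbitrary (and possibly distinct) endpoints, so one must verify that allowing open walks contributes nothing beyond $\langle g_i\rangle$. The spanning-tree hypothesis is precisely what makes this work, since the trivial labelling of $T$ lets us prepend and append tree paths without altering any walk-product. This is the voltage-graph, or covering-space, statement that the image of the fundamental group of $H$ under $f$ is generated by the labels on the edges outside the spanning tree.
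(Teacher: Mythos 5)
Your proposal is correct and rests on the same key idea as the paper's proof: tree edges carry the identity, so each generator $g_i$ is realized by the fundamental cycle of its cotree edge with respect to $T$, and tree paths let you string these together (the paper builds one long open walk looping through fundamental cycles, while you package the same construction as closed walks at a basepoint $v_0$ forming a subgroup). Your version is somewhat more carefully structured --- making the easy inclusion and the subgroup verification explicit, which the paper leaves implicit --- but it is not a genuinely different route.
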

\begin{proof}
	This results from the following observation: we may choose a cycle basis for $H$ such that every edge of $H$ not in $T$ is in its own fundamental cycle. To generate any given element of $\langle g_1,\dots, g_i \rangle$ simply consider the walk that starts in the fundamental cycle of the first required generator, does the requisite number of loops (raising this generator to the required power), and then traverses edges in $T$ (which are all assigned the identity element) to the next required generator and so on. Finally, noting that fundamental cycles can be traversed in either direction regardless of the point of entry completes the proof. \end{proof}

\subsection{Random Lifts of Graphs}

A lift of a graph is a covering space of a graph in the topological sense. In fact, it is shown in \cite{6} that any lift of given graph can be obtained through the construction using an assignment of permutations (or perfect matchings) to edges as shown in Section 1. One may even assume that any given set of edges which does not contain a cycle is assigned the identity permutation and still obtain every lift of a graph. In random lifts these permutations are chosen independently at random. 

\begin{defi}
	A graphical property only depends on the isomorphism type of a graph. In particular, a set $\mathcal{C}$ of all graphs with graphical property $\mathcal{P}$ is a union of complete isomorphism classes of graphs.
\end{defi}

It is also well known that given any set of edges of $G$ which does not contain a cycle, the assumption that this set is assigned the identity permutation does not change the probability of any graphical property of a random $n$-lift of $G$. These edges are called \emph{flat} edges. In a way, the usual construction of random lifts has a lot of redundancy, and we can still precisely describe the graphical properties of random $n$-lifts after conditioning on assuming a subset of a subtree to be flat. For our purposes we will always work with the following assumption: given a graph $G$, we choose a spanning tree $T$ of $G$ and assume that every edge in it is assigned the identity permutation, i.e. is flat.

Since we may assume that random $n$-lifts of graphs of $G$ are constructed by assigning permutations from $\mathcal{S}_n$ uniformly at random to edges outside a spanning tree $T$, we can use Corollary 3.3 to define the walk-subgroup of random $n$-lifts of $G$ as the subgroup of $\mathcal{S}_n$ generated by $l$ random permutations where $l = |E(G)| - |V(G)| + 1$ is the number of edges of $G$ which lie outside of the spanning tree which is also known as the Betti number of the graph.

The following definitions will be useful, 
\begin{defi}[Section]
	Every vertex in a lift $\tilde{G}$ of a graph $G$ is labeled by a vertex of $G$ and an element of $[1,n]$. All vertices labeled by the same element of $[1,n]$ are collectively referred to as a section of $\tilde{G}$.
\end{defi}

\begin{defi}[Associated Walk]
	Every element $\sigma$ in the walk-subgroup of a lift is the product of permutations assigned to edges along a (not unique) walk in the base graph. Such a walk is called an associated walk of $\sigma$. 
\end{defi}
\subsection{Iterated Random Lifts}
If $G_k \to G_{k-1} \to \dots \to G_1 \to G$ is a sequence of lifts of degree $n_k, \dots, n_{1}$ respectively, then $G_k$ is an $n_k n_{k-1}\dots n_{1}$-lift of $G$. However a random lift of degree $n_2$ of a random lift of degree $n_1$ of $G$ is not a random lift of $G$. That is to say it is distributed differently than a lift produced by a randomly assigning elements of $\mathcal{S}_{n_1n_2}$ to edges of $G$. 

Iterated lifts were studied in Chapter 3.3 of \cite{7} using wreath products of symmetric groups. It is shown there that similar to the way in which $n$-lifts of graphs can be described by assigning elements of $\mathcal{S}_n$ to edges of a graph, iterated $n_k n_{k-1}\dots n_{1}$-lifts can be described by assigning elements of $\mathcal{S}_{n_k} \wr \mathcal{S}_{n_{k-1}}\wr \dots \wr\mathcal{S}_{n_{1}}$ to the edges. So to construct a random $n_k n_{k-1}\dots n_{1}$-lift, we may assign random elements of $\mathcal{S}_{n_k} \wr \mathcal{S}_{n_{k-1}}\wr \dots \wr\mathcal{S}_{n_{1}}$ to the edges. Probabilistically, the model produced this way is the same one as taking a random $n_1$ lift, and a random $n_2$ lift of the result and so on. This construction produces the same redundancy as in the case for random $n$-lifts, and the probability of any graphical property of a random $n_k n_{k-1}\dots n_{1}$-lift does not change even if we assume that a set of edges which does not contain a cycle is assigned the identity element of $\mathcal{S}_{n_k} \wr \mathcal{S}_{n_{k-1}}\wr \dots \wr\mathcal{S}_{n_{1}}$.

There is a simple way in which the wreath product naturally arises in iterated lifts. In lifts of a graph $G$, the fiber of an edge $e$ connecting $u$ to $v$ in $G$ may connect the fibers of $u$ and $v$ through any perfect matching, but in iterated lifts it must also respect the family tree structure of the fibers of $u$ and $v$. For example, if $\tilde{G}$ is an iterated $n_2n_1$-lift of $G$, then the fiber of any vertex or edge has $n_2n_1$ elements, which may be indexed by $(i,j)$ where $i \in [1,n_1]$ and $j\in [1,n_2]$ represent the $j$th lift of the $i$th lift of the vertex or edge. If $e$ connects $u$ to $v$ in $G$, say that $i$th edge above $e$ connects the $a$th vertex over $u$ to the $b$th vertex over $v$ in the $n_1$-lift, then the $(i,j)$th edge above $e$ can only be connected to some $(a,k)$th vertex above $u$ to some $(b,l)$th vertex above $v$ in the iterated $n_2n_1$-lift. The wreath product $\mathcal{S}_{n_2}\wr \mathcal{S}_{n_1}$ is precisely the set of matchings which are restricted to respect the structure of a rooted tree in which the root has $n_1$ children, each of which have $n_2$ children. For a thorough discussion one may consult Chapter 3.3 of \cite{7}. We formally define wreath products and iterated random lifts: 
\begin{defi}
	Given two permutation groups $\mathcal{G}$ and $\mathcal{H}$ with domains T and S respectively, the wreath product of $\mathcal{G}$ and $\mathcal{H}$, denoted $\mathcal{G}\wr \mathcal{H}$, is the semi-direct product $\mathcal{G}^{|S|}\rtimes \mathcal{H}$, where the action of $h \in \mathcal{H}$ on $\mathcal{G}^{|S|}$ is defined to be $\varphi_h(g_1,g_2,\dots,g_{|S|})= \varphi(g_{h(1)},g_{h(2)}, \dots, g_{h(|S|)})$.
\end{defi}

The natural and faithful action of $\mathcal{G} \wr \mathcal{H}$ on the set $S\times T$ is defined to be: given $(\mu,\pi) \in \mathcal{G} \wr \mathcal{H}$ where $\mu \in \mathcal{G}^{|S|}$ and $\pi \in \mathcal{H}$ and $(s,t) \in S \times T$, $(\mu,\pi)(s,t) = (\pi(s),\mu_s(t))$. The wreath product of more than two groups can be obtained iteratively.

\begin{defi}
	An iterated random $n_kn_{k-1}\dots n_1$-lift, $\tilde{G}$, of a graph $G$ is constructed in the following way. First, arbitrarily orient the edges of $G$ and assign an element of  $\mathcal{S}_{n_k} \wr \mathcal{S}_{n_{k-1}}\wr \dots \wr\mathcal{S}_{n_{1}}$ to every edge of $G$ uniformly at random. The vertices of $\tilde{G}$ are $(v, (i_k,\dots, i_1))$ where $i_l \in [1,n_l]$ and $v \in V(G)$. Here, $(u,(i_k,\dots, i_1))$ is connected to $(v,(j_k,\dots,j_1))$ if and only if there is an edge $e$ connecting $u$ to $v$ in $G$, and this edge was assigned an element $\pi \in \mathcal{S}_{n_k} \wr \mathcal{S}_{n_{k-1}}\wr \dots \wr\mathcal{S}_{n_{1}}$ such that $\pi(i_k,\dots,i_1)=(j_k,\dots,j_1)$. The edges of $\tilde{G}$ are unoriented.
\end{defi}

With the usual assumption that the edges of a spanning tree are flat, we can define the walk-subgroup of an iterated random $n_k n_{k-1}\dots n_{1}$-lift of $G$ as the subgroup of $\mathcal{S}_{n_k} \wr \mathcal{S}_{n_{k-1}}\wr \dots \wr\mathcal{S}_{n_{1}}$ generated by $l$ random elements of $\mathcal{S}_{n_k} \wr \mathcal{S}_{n_{k-1}}\wr \dots \wr\mathcal{S}_{n_{1}}$ where $l = |E(G)| - |V(G)| + 1$ is the number of edges of $G$ which lie outside of the spanning tree.

\subsection{The Probability of Generating the Symmetric Group}

Given two random elements $\sigma,\tau$ of $\mathcal{S}_n$, it is natural to ask what subgroup of $\mathcal{S}_n$ they generate. It is easy to see that if $\sigma,\tau$ are both even permutations, they can only generate even permutations, and therefore cannot generate any subgroup of $\mathcal{S}_n$ bigger than $\mathcal{A}_n$. This happens with probability $\frac{1}{4}$. So we cannot hope to say that two random elements a.a.s. generate the whole of $\mathcal{S}_n$. However, in \cite{4} it is shown that the probability that two random elements of $\mathcal{S}_n$ generate $\mathcal{S}_n$ or $\mathcal{A}_n$ is at least $1 - \frac{2}{\log(\log(n))^2}$, which goes to one as $n$ increases. In \cite{3} this result is improved using facts about the classification of finite simple groups. 
\begin{prop}[Babai]The probability that two random elements of $\mathcal{S}_n$ generate $\mathcal{S}_n$ or $\mathcal{A}_n$ is $1 - \frac{1}{n} + O(\frac{1}{n^2})$.\qed\end{prop}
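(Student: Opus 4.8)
The statement is a deep theorem resting on the classification of finite simple groups (CFSG), so rather than a short argument I will describe the strategy one follows to recover the precise asymptotic $1-\frac1n+O(\frac1{n^2})$. The plan is to estimate the complementary probability that $\langle\sigma,\tau\rangle$ fails to contain $\mathcal{A}_n$. If $\langle\sigma,\tau\rangle\not\supseteq\mathcal{A}_n$ then it is a proper subgroup contained in some maximal subgroup $M\neq\mathcal{A}_n$ of $\mathcal{S}_n$ (in the case $\langle\sigma,\tau\rangle\le\mathcal{A}_n$ one instead takes a maximal subgroup of $\mathcal{A}_n$, whose order is governed by the same classification). Since $\sigma$ and $\tau$ are independent and uniform, for a conjugacy class of maximal subgroups with representative $M$ and $c_M=[\mathcal{S}_n:N_{\mathcal{S}_n}(M)]$ conjugates we have $P(\langle\sigma,\tau\rangle\le\text{some conjugate of }M)\le c_M(|M|/n!)^2$, because all conjugates have order $|M|$. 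Summing over conjugacy classes, refined by inclusion--exclusion for the dominant family, yields the estimate; by the O'Nan--Scott theorem the maximal subgroups fall into the intransitive, imprimitive, and primitive types, which I treat separately.

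First I would extract the leading term from the point stabilizers $\mathcal{S}_{n-1}$, the intransitive maximal subgroups fixing a single point. There are $n$ of them, each of order $(n-1)!$, so the probability that both $\sigma$ and $\tau$ fix a prescribed point is $((n-1)!/n!)^2=1/n^2$; summed over the $n$ points this contributes $\frac1n$. The inclusion--exclusion correction, namely the probability that $\sigma$ and $\tau$ share two common fixed points, is $\binom{n}{2}(1/(n(n-1)))^2=O(1/n^2)$, so the probability that $\sigma$ and $\tau$ have a common fixed point is exactly $\frac1n+O(\frac1{n^2})$. This event lies inside the failure event, since a common fixed point forces intransitivity, and it will turn out to carry the entire $\frac1n$.

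Next I would show every other maximal subgroup contributes only $O(1/n^2)$. For the remaining intransitive subgroups $\mathcal{S}_k\times\mathcal{S}_{n-k}$ with $2\le k\le n/2$, using $k!(n-k)!/n!=1/\binom{n}{k}$ the contribution of the whole conjugacy class is $\binom{n}{k}(1/\binom{n}{k})^2=1/\binom{n}{k}$, and summing over $k\ge2$ is dominated by the $k=2$ term of order $1/n^2$, the rest decaying rapidly. For the imprimitive subgroups $\mathcal{S}_a\wr\mathcal{S}_b$ with $n=ab$ and $a,b\ge2$, the order $(a!)^b b!$ is small relative to $n!$, and the number of conjugates equals the number of partitions of $[n]$ into $b$ blocks of size $a$; the class contribution reduces to the reciprocal of that partition count, which is exponentially small in each case. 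Hence the total over intransitive and imprimitive classes beyond the point stabilizers is $O(1/n^2)$.

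The hard part, and the one place where CFSG is indispensable, is the transitive primitive maximal subgroups; this is exactly what separates Babai's sharp bound from the elementary estimate of \cite{4}. Here one must control $\sum c_M(|M|/n!)^2\le\sum|M|^2/n!$ over all primitive $M\neq\mathcal{A}_n,\mathcal{S}_n$, using the trivial bound $c_M\le n!$. I would invoke Cameron's CFSG-based theorem: apart from a bounded list of ``large'' exceptions (the product-action and $k$-subset / Johnson-type actions of $\mathcal{S}_m$ and $\mathcal{A}_m$, which must be estimated directly but are still far below $\sqrt{n!}$ in order), every primitive subgroup of $\mathcal{S}_n$ other than $\mathcal{A}_n,\mathcal{S}_n$ has order at most $n^{O(\log n)}$. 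Since the number of conjugacy classes of primitive maximal subgroups is small and $|M|^2/n!\le e^{O((\log n)^2)-n\log n}$ for such $M$, the entire primitive contribution is superpolynomially small, hence $O(1/n^2)$. Assembling the three families, the failure probability is $\frac1n+O(\frac1{n^2})$, and the claim follows. The essential obstacle is precisely this primitive case: without CFSG one has only the weaker order bounds of Bochert or Wielandt, which give the qualitatively correct conclusion but merely the weaker quantitative estimate of \cite{4} rather than the exact $1-\frac1n+O(\frac1{n^2})$.
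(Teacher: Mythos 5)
Your proposal is correct and follows essentially the same route as the proof this paper summarizes (Babai's argument, built on Dixon's): the trichotomy intransitive/imprimitive/primitive, with the leading $\frac{1}{n}$ extracted from common invariant points and $k$-sets, the imprimitive classes shown exponentially negligible, and the primitive case disposed of by CFSG-based order bounds --- your appeal to Cameron's theorem is the same ingredient underlying the $O\left(\frac{n^{\sqrt{n}}}{n!}\right)$ estimate quoted in Section 3.4. Your maximal-subgroup union bound is simply an unpacking of Dixon's intransitivity estimate $\sum_{1\le k\le n/2}\binom{n}{k}^{-1}=\frac{1}{n}+O\left(\frac{1}{n^2}\right)$ and his imprimitivity bound, which the paper cites rather than reproves, so the two approaches coincide.
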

There is a concise summary of Babai's proof in \cite{5}: Babai's proof begins by appealing to two results of \cite{4}. The first shows that the probability that two random elements of $\mathcal{S}_n$ generate a transitive subgroup of $\mathcal{S}_n$ is $1-\frac{1}{n}+ O(\frac{1}{n^2})$. The second shows that the probability that this group is imprimitive is $\le n2^{\frac{-n}{4}}$. Babai complements these results with the following observation which relies on the classification of finite simple groups: the probability that these elements generate a primitive subgroup different from $\mathcal{S}_n$ or $\mathcal{A}_n$ is $O\left(\frac{n^{\sqrt{n}}}{n!}\right)$. It follows that the probability that two random elements of $\mathcal{S}_n$ generate a transitive subgroup of $\mathcal{S}_n$ which is not $\mathcal{S}_n$ or $\mathcal{A}_n$ is less than $O\left(n2^{\frac{-n}{4}}+\frac{n^{\sqrt{n}}}{n!}\right)$, which is $O\left(\frac{1}{n^2}\right)$.

It is easily possible to prove a general version of Babai's result for $l\ge 2$ random generators using the same proof strategy and minor modifications of arguments used by \cite{3} and \cite{4}. The following three lemmas require no new mathematical insight. \begin{lem}
	The probability that $l$ independently chosen random permutations from $\mathcal{S}_n$ fail to generate a transitive subgroup is bounded by \[\sum_{1\le r \le n/2}{n \choose r}^{1-l} \le \frac{1}{n^{l-1}} + O\left(\frac{1}{n^{l}}\right)\]
\end{lem}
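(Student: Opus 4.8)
The plan is to reduce intransitivity to the existence of a small setwise-invariant subset, apply a union bound, and then carefully estimate the resulting sum of binomial coefficients.

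First I would observe that $\langle \sigma_1,\dots,\sigma_l\rangle$ fails to be transitive precisely when $[n]$ splits into at least two orbits. In that case the smallest orbit $B$ has size $1\le |B|\le n/2$ and is fixed setwise by every generator, i.e. $\sigma_i(B)=B$ for all $i$; conversely any nonempty proper invariant set witnesses intransitivity. Hence the failure event is contained in the union, over all sets $B$ with $1\le |B|\le n/2$, of the events $\{\sigma_i(B)=B \text{ for all } i\}$. For a fixed $B$ with $|B|=r$, a uniform $\sigma\in\mathcal{S}_n$ fixes $B$ setwise exactly when it permutes $B$ and its complement separately, so $\Pr[\sigma(B)=B]=r!\,(n-r)!/n!=\binom{n}{r}^{-1}$; by independence of the $l$ generators the probability that all of them fix $B$ is $\binom{n}{r}^{-l}$. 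Summing over the $\binom{n}{r}$ sets of each size and over $r$, the union bound gives
\[
\Pr[\text{intransitive}] \;\le\; \sum_{1\le r\le n/2}\binom{n}{r}\binom{n}{r}^{-l} \;=\; \sum_{1\le r\le n/2}\binom{n}{r}^{1-l},
\]
which is the first inequality.

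It then remains to estimate this sum. Since $l\ge 2$, the exponent $1-l$ is negative, so the summand is largest where $\binom{n}{r}$ is smallest, namely at $r=1$, contributing exactly $\binom{n}{1}^{1-l}=n^{1-l}=1/n^{\,l-1}$, the claimed main term. For the tail I would bound the $r=2$ term by $\binom{n}{2}^{1-l}=\Theta(n^{2(1-l)})$, which is $O(n^{-l})$ precisely because $2-2l\le -l$ for $l\ge 2$. For $3\le r\le n/2$ I would use monotonicity, $\binom{n}{r}\ge\binom{n}{3}$, so each such term is at most $\binom{n}{3}^{1-l}=\Theta(n^{3(1-l)})$; with fewer than $n/2$ terms this sums to $O(n^{4-3l})$, again $O(n^{-l})$ since $4-3l\le -l$ for $l\ge 2$. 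Adding the three contributions yields $1/n^{\,l-1}+O(1/n^{\,l})$.

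The only delicate point is this tail estimate. A single crude bound of the form ``number of terms times largest term'' applied to all of $r\ge 2$ is too lossy: it gives only $O(1/n)$ when $l=2$, since there are $\Theta(n)$ terms and the largest is $\Theta(n^{-2})$. The fix is to isolate the $r=2$ term, whose order $\Theta(n^{2-2l})$ already matches $O(n^{-l})$ exactly at $l=2$, and to control the genuinely smaller terms $r\ge 3$ using the rapid growth of $\binom{n}{r}$ away from $r=2$. Everything else is a routine union bound and the elementary identity $\Pr[\sigma(B)=B]=\binom{n}{r}^{-1}$.
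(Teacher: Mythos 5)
Your proposal is correct and is essentially the paper's argument in expanded form: the paper's proof simply says to replace $2$ with $l$ in Lemma 1.1 of Babai \cite{3}, and that lemma is exactly your union bound over setwise-invariant subsets of size at most $n/2$, with $\Pr[\sigma(B)=B]=\binom{n}{r}^{-1}$ and independence giving $\sum_{1\le r\le n/2}\binom{n}{r}^{1-l}$. Your explicit tail estimate (isolating $r=1$ and $r=2$, then bounding $3\le r\le n/2$ via monotonicity of $\binom{n}{r}$) correctly fills in the asymptotic $\frac{1}{n^{l-1}}+O\bigl(\frac{1}{n^{l}}\bigr)$ that the paper leaves to the citation.
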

\begin{proof} Replace 2 with $l$ in Lemma 1.1 of \cite{3}.
\end{proof}

\begin{lem}
	The probability that $l$ random elements generate a transitive but imprimitive subgroup of $\mathcal{S}_n$ is less than $n2^{\frac{-n(l-1)}{4}}$.
\end{lem}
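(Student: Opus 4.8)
The plan is to adapt the imprimitivity bound from \cite{4} (the $l=2$ case summarized in Babai's proof) to general $l$. First I would recall the standard structure of an imprimitive transitive group: if a transitive subgroup $\mathcal{H}\le \mathcal{S}_n$ is imprimitive, then its domain admits a nontrivial block system, i.e.\ a partition of $[1,n]$ into $m$ blocks each of size $b$ with $mb=n$ and $1<b<n$, such that every element of $\mathcal{H}$ permutes the blocks. The key observation is that a set of generators lands inside such an imprimitive group only if \emph{every} generator respects some \emph{common} block system. So the plan is to union-bound over all block systems $\mathcal{B}$ (parametrized by the divisor $b\mid n$ and the choice of partition into equal blocks), and for each fixed $\mathcal{B}$ bound the probability that a single uniformly random permutation preserves $\mathcal{B}$, then raise that probability to the $l$-th power by independence.

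The core computation is as follows. For a fixed block system with $m=n/b$ blocks of size $b$, the number of permutations preserving it is $(b!)^m\, m!$, since such a permutation is determined by a permutation of the $m$ blocks together with a bijection within each block. Hence a single random permutation preserves this fixed system with probability $(b!)^m m!/n!$, and by independence all $l$ generators do so with probability $\bigl((b!)^m m!/n!\bigr)^l$. Summing over the number of block systems of a given shape (which is $\frac{n!}{(b!)^m m!}$) and then over divisors $b$, the probability of generating a transitive but imprimitive subgroup is at most
\[
\sum_{b\mid n,\;1<b<n}\frac{n!}{(b!)^m m!}\left(\frac{(b!)^m m!}{n!}\right)^{l}
=\sum_{b\mid n,\;1<b<n}\left(\frac{(b!)^m m!}{n!}\right)^{l-1}.
\]
The task then reduces to showing this sum is bounded by $n2^{-n(l-1)/4}$. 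I would do this by establishing that for every nontrivial divisor the ratio $(b!)^m m!/n!$ is at most $2^{-n/4}$; this is exactly the per-factor estimate that the $l=2$ argument of \cite{4} already supplies, so raising it to the power $l-1$ gives the $2^{-n(l-1)/4}$ term, and the number of divisors is crudely at most $n$, yielding the factor of $n$.

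The main obstacle, and the only step requiring genuine care, is the uniform bound $(b!)^m m!/n! \le 2^{-n/4}$ across all nontrivial block shapes. The worst cases are the extreme block systems, namely $b=2$ (so $m=n/2$) and $b=n/2$ (so $m=2$), where the multinomial coefficient $\binom{n}{b,b,\dots,b}$ or the central binomial term is closest to maximal; the intermediate values of $b$ are strictly better and can be dispatched by a convexity or monotonicity argument on $\log\bigl(n!/((b!)^m m!)\bigr)$. Since this is precisely the estimate already carried out for two generators in \cite{3} and \cite{4}, the honest statement is that no new idea is needed: I would simply invoke that per-system bound and note that independence of the $l$ generators converts the exponent from $1$ to $l-1$, completing the proof.
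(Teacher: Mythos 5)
Your proposal is correct and is essentially the paper's own proof: the paper's proof of this lemma is literally ``Replace 2 with $l$ in the arguments after Lemma 1 and in Lemma 2 of \cite{4},'' and your argument---union-bounding over block systems, using the stabilizer count $(b!)^m m!$ so that independence of the $l$ generators turns the exponent into $l-1$, and invoking Dixon's per-system estimate $(b!)^m m!/n!\le 2^{-n/4}$ with at most $n$ divisors---is exactly that generalization, spelled out in full. No gap; if anything, your write-up is more self-contained than the paper's citation-only proof.
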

\begin{proof} Replace 2 with $l$ in the arguments after Lemma 1 and in Lemma 2 of \cite{4}.
\end{proof}

\begin{lem}[Theorem 2.8 \cite{3}]
	The probability that $l$ random permutations generate a primitive group other than $\mathcal{A}_n$ or $\mathcal{S}_n$ is $O((\frac{n^{\sqrt{n}}}{n!})^{l-1})$.\qed
\end{lem}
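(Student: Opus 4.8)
The plan is to prove the bound by a union bound over the maximal primitive subgroups of $\mathcal{S}_n$, generalizing the $l=2$ argument so that the exponent $2$ is replaced by $l$ throughout. First I would observe that if $l$ random permutations $\sigma_1,\dots,\sigma_l$ generate a primitive subgroup $G\le \mathcal{S}_n$ with $G\neq \mathcal{A}_n,\mathcal{S}_n$, then all $l$ of them lie in some maximal subgroup $M$ of $\mathcal{S}_n$ containing $G$; since $G\le M$ is primitive, $M$ is itself a primitive subgroup and is neither $\mathcal{A}_n$ nor $\mathcal{S}_n$. Hence the event in question is contained in the union, over all maximal primitive subgroups $M\neq \mathcal{A}_n$, of the events $\{\sigma_1,\dots,\sigma_l\in M\}$, and it suffices to bound the total probability of this union.

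Because the $\sigma_i$ are independent and uniform, the probability that all $l$ of them lie in a fixed subgroup $M$ is $(|M|/n!)^l$. I would then group the maximal primitive subgroups into $\mathcal{S}_n$-conjugacy classes. A class with representative $M$ contains $[\mathcal{S}_n:N_{\mathcal{S}_n}(M)]\le n!/|M|$ conjugates, all of the same order, so its total contribution to the union bound is at most $\frac{n!}{|M|}\left(\frac{|M|}{n!}\right)^{l}=\left(\frac{|M|}{n!}\right)^{l-1}$, which already exhibits the exponent $l-1$ appearing in the statement. Summing over all classes, the probability is at most $(n!)^{-(l-1)}\sum_{M}|M|^{l-1}$, where $M$ runs over a set of class representatives.

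The key external input, and the step I expect to be the main obstacle, is supplied by the classification of finite simple groups: every primitive subgroup of $\mathcal{S}_n$ other than $\mathcal{A}_n$ and $\mathcal{S}_n$ has order at most $n^{\sqrt n}$ for all large $n$, and moreover the sum $\sum_{M}|M|$ of orders over class representatives of maximal primitive subgroups is itself $O(n^{\sqrt n})$. Granting these, I would estimate $\sum_{M}|M|^{l-1}\le \left(\max_{M}|M|\right)^{l-2}\sum_{M}|M| = O\!\left((n^{\sqrt n})^{l-2}\cdot n^{\sqrt n}\right)=O\!\left((n^{\sqrt n})^{l-1}\right)$, and dividing by $(n!)^{l-1}$ yields the claimed bound $O\!\left((n^{\sqrt n}/n!)^{l-1}\right)$. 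Every step except this CFSG-based order-and-count estimate is an elementary union bound, which is precisely why passing from $l=2$ to arbitrary $l$ requires no new idea once that estimate is granted.
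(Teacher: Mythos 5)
The paper does not actually prove this lemma: it is quoted verbatim (note the \qed in the statement) as Theorem~2.8 of Babai \cite{3}, and your reconstruction follows the strategy Babai himself uses --- a union bound over conjugacy classes of maximal primitive subgroups, counting conjugates via the normalizer index $[\mathcal{S}_n : N_{\mathcal{S}_n}(M)] \le n!/|M|$, and a CFSG-based order estimate. Your elementary steps are all sound: the probability that $l$ independent uniform permutations land in a fixed $M$ is $(|M|/n!)^l$, each class contributes at most $(|M|/n!)^{l-1}$, and for $l \ge 2$ the interpolation $\sum_M |M|^{l-1} \le (\max_M |M|)^{l-2} \sum_M |M|$ is valid; this is exactly where the exponent $l-1$ comes from, and it is why the paper can cite the result with no new argument.

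There is, however, one step that is false as written and leaves a genuine hole: from $G \neq \mathcal{A}_n, \mathcal{S}_n$ you cannot conclude that a maximal subgroup $M$ of $\mathcal{S}_n$ containing $G$ satisfies $M \neq \mathcal{A}_n$. When all $l$ generators are even (an event of probability $2^{-l}$, not negligible), the generated primitive group lies inside $\mathcal{A}_n$, and it can happen that $\mathcal{A}_n$ is the \emph{only} proper overgroup of $G$ in $\mathcal{S}_n$: for example $G \cong L_2(7)$ acting on the $7$ points of the Fano plane is maximal in $\mathcal{A}_7$ and satisfies $N_{\mathcal{S}_7}(G) = G$, so its only overgroups in $\mathcal{S}_7$ are $\mathcal{A}_7$ and $\mathcal{S}_7$; the event that the generators produce such a $G$ escapes your union entirely. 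The standard repair --- and what the cited proof does --- is to take the union over primitive maximal subgroups of $\mathcal{S}_n$ other than $\mathcal{A}_n$ \emph{together with} primitive maximal subgroups of $\mathcal{A}_n$. Every primitive $G \neq \mathcal{A}_n, \mathcal{S}_n$ lies in a member of this enlarged family, the CFSG order bound $|M| \le n^{\sqrt n}$ applies to every primitive group of degree $n$ other than $\mathcal{A}_n$ and $\mathcal{S}_n$ regardless of which ambient group it is maximal in, and the conjugate count via the $\mathcal{S}_n$-normalizer goes through unchanged; the slack between the true maximal order (about $\exp(c\sqrt{n}\log n)$ with $c<1$, attained by $\mathcal{S}_m$ on $2$-subsets) and $n^{\sqrt n}$ absorbs the class count for the larger family, so the final bound $O\bigl((n^{\sqrt n}/n!)^{l-1}\bigr)$ is unaffected. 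With this one correction your argument is a faithful reconstruction of the proof the paper delegates to \cite{3}.
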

\begin{thm}
	The probability that $l$ random elements of $\mathcal{S}_n$ generate $\mathcal{S}_n$ or $\mathcal{A}_n$ is $1 - \frac{1}{n^{l-1}} + O(\frac{1}{n^l})$. 
\end{thm}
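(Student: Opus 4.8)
The plan is to decompose the failure event according to the standard trichotomy for subgroups of $\mathcal{S}_n$—intransitive, transitive but imprimitive, or primitive—and then apply the three preceding lemmas through a single union bound. Any subgroup $H\le\mathcal{S}_n$ that is neither $\mathcal{S}_n$ nor $\mathcal{A}_n$ must fall into at least one of these three classes (a primitive $H$ is either $\mathcal{S}_n$, $\mathcal{A}_n$, or an exotic primitive group). Hence the event that $l$ random permutations fail to generate $\mathcal{S}_n$ or $\mathcal{A}_n$ is contained in the union of the intransitivity, imprimitivity, and exotic-primitivity events, so by subadditivity its probability is at most
\[
\left(\frac{1}{n^{l-1}} + O\!\left(\frac{1}{n^{l}}\right)\right) + n\,2^{-n(l-1)/4} + O\!\left(\left(\frac{n^{\sqrt{n}}}{n!}\right)^{l-1}\right),
\]
where the three summands are furnished respectively by the intransitivity lemma, the imprimitivity lemma, and the primitive-group lemma stated above.

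The next step is to check that the last two summands are negligible against $1/n^{l-1}$—in fact $o(1/n^l)$—so that they are absorbed into the error term. For the imprimitive contribution this is immediate, since $n\,2^{-n(l-1)/4}$ decays exponentially in $n$ (using $l-1\ge 1$) and thus beats every fixed power of $1/n$. For the exotic-primitive contribution I would invoke Stirling's bound $n!\ge (n/e)^n$ to get $n^{\sqrt n}/n!\le e^{\,n}\,n^{\sqrt n - n}$, whose logarithm is $-(1-o(1))\,n\log n\to-\infty$; raising to the power $l-1\ge 1$ keeps it super-polynomially small. Both terms are therefore $O(1/n^l)$, which yields the upper bound $\Pr[\text{fail}]\le n^{1-l}+O(n^{-l})$.

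For the matching lower bound I would exhibit an explicit sub-event of the failure event of the correct order, since the lemmas supply only upper bounds. Let $A_i$ be the event that all $l$ generators fix the point $i$, so that $\Pr[A_i]=n^{-l}$ and $\Pr[A_i\cap A_j]=(n(n-1))^{-l}$. Whenever some $A_i$ occurs, the generated group fixes $i$, is therefore intransitive, and so fails to be $\mathcal{S}_n$ or $\mathcal{A}_n$. By the Bonferroni inequalities,
\[
\sum_{i}\Pr[A_i] \;-\; \sum_{i<j}\Pr[A_i\cap A_j] \;\le\; \Pr\!\Big[\bigcup_i A_i\Big] \;\le\; \sum_i\Pr[A_i],
\]
and since $\sum_i\Pr[A_i]=n\cdot n^{-l}=n^{1-l}$ while $\sum_{i<j}\Pr[A_i\cap A_j]=\binom{n}{2}(n(n-1))^{-l}=O(n^{-l})$ for $l\ge 2$, we obtain $\Pr[\bigcup_i A_i]=n^{1-l}+O(n^{-l})$. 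Thus $\Pr[\text{fail}]\ge n^{1-l}-O(n^{-l})$, and combining with the upper bound gives $\Pr[\text{fail}]=n^{1-l}+O(n^{-l})$, equivalently the claimed probability $1-\frac{1}{n^{l-1}}+O\!\left(\frac{1}{n^l}\right)$.

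I expect the only genuine obstacle to be the lower bound: the three lemmas are one-sided, so proving equality rather than a mere upper bound forces one to verify that the leading $r=1$ term of the intransitivity estimate $\sum_{r}\binom{n}{r}^{1-l}$ is actually attained. The common-fixed-point computation above does precisely this, and simultaneously confirms that the higher intransitivity terms ($r\ge 2$), together with the imprimitive and exotic-primitive contributions, all sit at order $O(1/n^l)$ or smaller. Everything else is a routine union bound and elementary asymptotic bookkeeping.
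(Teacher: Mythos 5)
Your proposal is correct and follows the same route as the paper: decompose the failure event into intransitive, transitive-but-imprimitive, and exotic-primitive cases, bound these by Lemmas 3.9, 3.10 and 3.11 respectively, and observe that the last two contributions are absorbed into the $O\left(\frac{1}{n^l}\right)$ error term. The one place you go beyond the paper's own write-up is the lower bound. The paper simply says ``the result follows from Lemma 3.9,'' but Lemma 3.9 as stated only bounds the intransitivity probability from \emph{above}; to obtain the asserted asymptotic equality $1-\frac{1}{n^{l-1}}+O\left(\frac{1}{n^l}\right)$, rather than merely the bound $1-\frac{1}{n^{l-1}}-O\left(\frac{1}{n^l}\right)$ from below, one also needs a matching lower bound of order $n^{1-l}$ on the failure probability. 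Your common-fixed-point argument supplies exactly this: the computations $\Pr[A_i]=n^{-l}$, $\Pr[A_i\cap A_j]=\left(n(n-1)\right)^{-l}$, and $\binom{n}{2}\left(n(n-1)\right)^{-l}=O(n^{-l})$ for $l\ge 2$ are all correct, and the Bonferroni step legitimately yields $\Pr\left[\bigcup_i A_i\right]\ge n^{1-l}-O(n^{-l})$, since a group fixing a point is intransitive and hence is neither $\mathcal{S}_n$ nor $\mathcal{A}_n$. In effect you have made explicit what the paper implicitly inherits from Dixon's two-sided version of the transitivity estimate, so your proof is slightly more self-contained. The only caveat is that the statement, and your use of $l-1\ge 1$ in the exponentially small terms, requires $l\ge 2$; this hypothesis appears in the paper's surrounding discussion but not in the theorem statement itself, and the claim is false for $l=1$.
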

\begin{proof} From Lemmas 3.10 and 3.11, we have that the probability that $l$ random permutations generate a transitive subgroup of $\mathcal{S}_n$, but not $\mathcal{S}_n$ or $\mathcal{A}_n$ is less than $O\left(n2^{\frac{-n(l-1)}{4}} + \left( \frac{n^{\sqrt{n}}}{n!} \right)^{l-1} \right)$ which is certainly $O\left(\frac{1}{n^l}\right)$. The result follows from Lemma 3.9.\end{proof}
In order to set up our application of this theorem, we mention the following fact from group theory: $\mathcal{S}_n$ and $\mathcal{A}_n$ act $n$-transitively and $(n-2)$-transitively on $\{1,\dots,n\}$ respectively. Keeping this mind, Theorem 3.12 can be reinterpreted as follows: the probability that $l$ random permutations generate a subgroup of $\mathcal{S}_n$ which acts at least $(n-2)$-transitively on $\{1,\dots,n\}$ is $1 - \frac{1}{n^{l-1}}+ O(\frac{1}{n^l})$.

\section{Results}
\subsection{Random Lifts}
\subsubsection{A Simple Application: Connectivity}
As an example of the utility of the walk-subgroup of random $n$-lifts, we will use it to calculate the probability of connectivity in random $n$-lifts. As we mentioned before, when considering the random $n$-lifts of $G$, we may choose a spanning tree $T$ of $G$ and assume that all edges in $T$ are flat i.e. are assigned the identity permutation. In particular, this assures that every section of the lift has a spanning tree inherited from $T$.  

\begin{thm}[Connectivity]
	Let $G$ be a simple connected graph with $l-1$ more edges than vertices (l $\ge$ 1). Then a random $n$-lift of $G$ is connected with probability $1 - \frac{1}{n^{l-1}} + O\left( \frac{1}{n^{l}} \right)$. 

\end{thm}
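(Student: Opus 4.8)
The plan is to reduce the connectivity of a random $n$-lift $\tilde{G}$ of $G$ to the transitivity of its walk-subgroup, and then to read off the probability from Section 3.4. Concretely, I would first establish the \emph{exact} identity
\[
\Pr[\tilde{G}\text{ is connected}] \;=\; \Pr[\text{the walk-subgroup of }\tilde{G}\text{ is transitive on }[1,n]],
\]
and then evaluate the right-hand side using Lemma 3.9 together with a matching lower bound on the probability of intransitivity.

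For the reduction, fix a spanning tree $T$ of $G$ and assume, as permitted, that every edge of $T$ is flat. Then for each fixed $i\in[1,n]$ the lifted tree edges connect the vertices $\{(w,i):w\in V(G)\}$ into a copy of $T$, so every section is internally connected; hence the connectivity of $\tilde{G}$ depends only on how the $n$ sections are joined to one another. Rooting $T$ at a vertex $r$, the walk-lifting property says that a closed walk in $G$ based at $r$ lifts, starting from $(r,i)$, to a walk ending at $(r,\sigma(i))$, where $\sigma$ is the walk-product of the permutations met along the way. By the spanning-tree assignment result (Proposition 3.2), the set of all such $\sigma$ is exactly the walk-subgroup $W$, namely the subgroup of $\mathcal{S}_n$ generated by the permutations on the non-tree edges. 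Thus the sections reachable from $(r,i)$ are precisely the $W$-orbit of $i$, and chaining this with the internal connectivity of each section shows that $(u,i)$ and $(v,j)$ lie in the same component of $\tilde{G}$ if and only if $j$ lies in the $W$-orbit of $i$. Therefore $\tilde{G}$ is connected if and only if $W$ has a single orbit, i.e. $W$ is transitive, which gives the displayed identity.

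It then remains to compute $\Pr[W\text{ transitive}]$. Since the Betti number of $G$ equals $l$, the group $W$ is generated by $l$ permutations chosen independently and uniformly from $\mathcal{S}_n$. Lemma 3.9 bounds the probability of intransitivity by $\sum_{1\le r\le n/2}\binom{n}{r}^{1-l}$, whose $r=1$ term is exactly $n^{1-l}=1/n^{l-1}$ and whose remaining terms sum to $O(1/n^l)$ when $l\ge2$; this gives $\Pr[W\text{ intransitive}]\le 1/n^{l-1}+O(1/n^l)$. To pin down the exact leading term I would supply a matching lower bound from common fixed points: a group fixing a point is intransitive, so if $X$ counts the points of $[1,n]$ fixed by all $l$ generators then $\mathbb{E}[X]=n\cdot n^{-l}=1/n^{l-1}$, while the second factorial moment is $\mathbb{E}[X(X-1)]=n(n-1)\bigl(n(n-1)\bigr)^{-l}=O(1/n^l)$ for $l\ge2$. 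By Bonferroni, $\Pr[X\ge1]\ge \mathbb{E}[X]-\tfrac12\mathbb{E}[X(X-1)]=1/n^{l-1}-O(1/n^l)$, so the two bounds agree to leading order and $\Pr[W\text{ intransitive}]=1/n^{l-1}+O(1/n^l)$. Combined with the reduction, this yields the claimed connectivity probability $1-1/n^{l-1}+O(1/n^l)$; the degenerate case $l=1$ (a single random permutation, connected iff it is an $n$-cycle, probability $1/n=O(1/n)$) is checked directly.

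The conceptual heart — and the step I expect to require the most care — is the reduction in the second paragraph: one must argue cleanly that reachability between sections is governed exactly by the orbits of the walk-subgroup, which depends on the flatness of the spanning tree, the walk-lifting property, and Proposition 3.2 all fitting together. The probability computation is then largely routine, the only subtlety being that Lemma 3.9 supplies merely an upper bound, so the common-fixed-point lower bound is genuinely needed to obtain the exact leading constant rather than just an a.a.s.\ statement.
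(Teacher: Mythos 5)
Your proposal is correct and follows essentially the same route as the paper: the reduction of connectivity to transitivity of the walk-subgroup is exactly Proposition 4.2 (via flat spanning-tree edges making each section internally connected, combined with Lemma 4.3 and Proposition 3.2), and the probability is then read off from Lemma 3.9. Your Bonferroni/common-fixed-point lower bound is a welcome extra: the paper's Lemma 3.9 as stated only bounds the failure probability from above, and the matching lower bound needed for the exact leading term $1/n^{l-1}$ is left implicit in the citation of Dixon's argument, so you have correctly identified and filled a detail the paper glosses over.
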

We make the following connection to the walk-subgroup.
\begin{prop}
	A random $n$-lift $H$ of $G$ is connected if and only if its walk-subgroup is a transitive subgroup of $\mathcal{S}_n$.
\end{prop}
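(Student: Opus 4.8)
The plan is to reduce connectivity of $H$ to a statement about how the $n$ sections of $H$ are linked, and then identify that linking with the action of the walk-subgroup on $[1,n]$. The crucial preliminary observation is that because the spanning tree $T$ is assumed flat, every section $\{(w,i) : w \in V(G)\}$ is itself connected: the lift of $T$ restricted to section $i$ is an isomorphic copy of the connected spanning tree $T$. Fixing a root $r \in V(G)$, one can therefore travel freely within section $i$ between $(r,i)$ and any $(w,i)$. Consequently $H$ is connected if and only if all $n$ sections lie in a single component, i.e. if and only if for every pair $i,j \in [1,n]$ there is a path in $H$ joining $(r,i)$ to $(r,j)$.

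Next I would set up the monodromy correspondence between paths in $H$ and closed walks in $G$. By the walk-lifting property, a closed walk $W$ in $G$ based at $r$ lifts, starting from $(r,i)$, to a walk in $H$ whose endpoint is $(r,\sigma(i))$, where $\sigma$ is the walk-product of $W$ under the assignment $f$ sending tree edges to the identity and the non-tree edge $e_t$ to its generator $g_t$. Conversely, applying the covering homomorphism to any path in $H$ from $(r,i)$ to $(r,j)$ yields a closed walk $W$ in $G$ based at $r$, and the original path is precisely the lift of $W$ from $(r,i)$, so its walk-product $\sigma$ satisfies $\sigma(i)=j$. By Proposition 3.2 the walk-product of any walk lies in the walk-subgroup $\langle g_1,\dots,g_l\rangle$, and every element of that subgroup is realized as the walk-product of some closed walk based at $r$.

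With this dictionary the proposition follows in both directions. For the forward implication, assume the walk-subgroup is transitive; given $i,j$ choose $\sigma$ in the walk-subgroup with $\sigma(i)=j$ together with a closed walk $W$ based at $r$ realizing $\sigma$, so that the lift of $W$ from $(r,i)$ terminates at $(r,j)$, joining sections $i$ and $j$. Since $i,j$ were arbitrary, combined with the within-section connectivity above, $H$ is connected. For the converse, assume $H$ is connected; then for any $i,j$ there is a path from $(r,i)$ to $(r,j)$, whose projected closed walk has walk-product $\sigma$ in the walk-subgroup with $\sigma(i)=j$, so the walk-subgroup is transitive.

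The main obstacle is making the monodromy correspondence precise and self-consistent: one must fix once and for all a convention for composing edge permutations along a walk (and for how reverse traversals act by inverses), verify that under this convention the lift of a closed based walk ends at the fiber point indexed by its walk-product, and check that projecting a path in $H$ and re-lifting recovers the same path. I expect the order-of-composition bookkeeping to be the only genuinely delicate point; everything else reduces to the flatness of $T$, which makes each section connected, and to Proposition 3.2, which identifies walk-products with the subgroup $\langle g_i\rangle$.
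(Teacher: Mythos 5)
Your proposal is correct and follows essentially the same route as the paper's proof: both use the flat spanning tree to connect each section internally and the correspondence between lifted walks and walk-products to pass between sections (transitivity giving connectivity) and, conversely, project paths between fiber points to obtain elements $\sigma$ with $\sigma(i)=j$. Your version is somewhat more carefully organized around a fixed root and an explicit monodromy dictionary, but the underlying argument is identical.
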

\begin{proof}
Suppose that $H$ is connected. Starting at any vertex there exists a walk which can reach every other vertex. In particular, if the walk starts on the vertex $(v,i)$, it must be able to reach the vertices $(v,j)$ for all $1 \le j \le n$. The projection of the walk taking $(v,i)$ to $(v,j)$ to $G$ gives an element of the walk-subgroup, $\sigma$, such that $\sigma(i)=j$. Since this is true for all $i$ and $j$ the walk-subgroup must be a transitive subgroup of $\mathcal{S}_n$. 

Conversely suppose that the walk--subgroup is transitive. Without loss of generality, suppose a walk starts on vertex $(v,1)$. It can first walk along the lifts of the flat edges of $G$ (which form a spanning tree of every section of $H$) to cover all vertices of the form $(u,1)$. Then, it can take the walk associated with a permutation $\sigma$ such that $\sigma(1)=2$ to end up at a vertex $(a,2)$. Such an element of the walk-subgroup exists by assumption. From here it can cover all vertices of the form $(b,2)$ and continue similarly, covering the whole graph. This tells us that from every vertex there is a walk which can cover the entire graph, implying that the graph is connected.
\end{proof}
\begin{lem}
	Let $G$ be a graph with $l$ non-flat edges (or $l-1$ more edges than vertices). Then the probability that a random lift of $G$ is connected is the probability that $l$ random elements of $\mathcal{S}_n$ generate a transitive subgroup of $S_n$.
\end{lem}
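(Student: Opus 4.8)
The plan is to chain together the two structural facts already in hand: Proposition 4.2, which identifies connectivity of the lift with transitivity of its walk-subgroup, and the spanning-tree description of the walk-subgroup from Section 3.1, which identifies the walk-subgroup with the subgroup generated by the permutations placed on the non-tree edges. Almost all of the work has been done by these earlier results; what remains is essentially to verify that the two probability spaces coincide.

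First I would fix a spanning tree $T$ of $G$ and invoke the flat-edge reduction. Since connectivity is a graphical property (Definition 3.3) and declaring any cycle-free set of edges to be flat leaves the probability of every graphical property unchanged, I may assume without loss of generality that each edge of $T$ carries the identity permutation. Under this assumption the lift is determined by the permutations $\sigma_1, \dots, \sigma_l$ assigned to the $l = |E(G)| - |V(G)| + 1$ edges lying outside $T$, and by the definition of a random lift these $\sigma_i$ are independent and uniformly distributed over $\mathcal{S}_n$.

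Next I would apply the spanning-tree proposition of Section 3.1 with group $\mathcal{G} = \mathcal{S}_n$ and the assignment sending each tree edge to the identity and each non-tree edge to its permutation $\sigma_i$. This gives that the walk-subgroup of the lift is exactly $\langle \sigma_1, \dots, \sigma_l \rangle$. Combining this with Proposition 4.2, the lift is connected if and only if $\langle \sigma_1, \dots, \sigma_l \rangle$ is transitive. Passing to probabilities over the independent uniform choices of $\sigma_1, \dots, \sigma_l$, the probability that the lift is connected equals the probability that $l$ random elements of $\mathcal{S}_n$ generate a transitive subgroup, which is the assertion.

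I do not expect a genuine obstacle here, as the statement is essentially a translation between two equivalent descriptions of the same random object. The only point that warrants care is the measure-matching in the last step: one must confirm that the distribution of the walk-subgroup induced by the random-lift construction is literally the distribution of a subgroup generated by $l$ independent uniform permutations. This is precisely what the flat-edge reduction, together with the spanning-tree proposition, delivers, so no further argument beyond bookkeeping is needed.
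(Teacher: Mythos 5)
Your proposal is correct and matches the paper's argument exactly: the paper also derives this lemma immediately from Proposition 4.2 together with Proposition 3.2 (the spanning-tree description of the walk-subgroup), relying on the flat-edge reduction from Section 3.2. You have merely spelled out the measure-matching bookkeeping that the paper leaves implicit, which is fine.
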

\begin{proof}
	This lemma is an immediate consequence of Propositions 4.2 and 3.2.\end{proof} \begin{proof}[Proof of Theorem 4.1] This follows from Lemmas 4.3 and 3.9.\end{proof}

\subsubsection{Edge Expansion: Lower Bound}
\begin{defi}
	The isoperimetric constant or edge expansion of a graph $G$ is defined to be \[\min_{S\subset V(G), |S|\le V/2}\frac{E(S,S^c)}{|S|}\] where $E(S,S^c)$ is number of edges leaving $S$.
\end{defi}

\begin{thm}[Edge Expansion]

	Let $G$ be a simple connected graph with $l-1$ more edges than vertices (l $\ge$ 1). Then there exists a constant $\xi(G)>0$, such that a random $n$-lift of $G$ (for $n \ge 3$) has edge expansion at least $\xi(G)$, with probability $1 - \frac{1}{n^{l-1}} + O\left( \frac{1}{n^{l}} \right)$. 

\end{thm}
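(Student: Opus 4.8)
The plan is to mirror the proof of Theorem 4.1, isolating the dominant failure mode—disconnection, which the walk-subgroup already controls—and then showing that every other obstruction to expansion is of strictly lower order. Since a graph with edge expansion at least $\xi(G)>0$ is in particular connected, the event that a random $n$-lift $H$ fails to have expansion $\ge \xi(G)$ splits into two cases: (i) $H$ is disconnected; or (ii) $H$ is connected yet admits a cut $(S,S^c)$ with $|S|\le |V(H)|/2$ and $E(S,S^c)<\xi(G)|S|$. By Proposition 4.2 together with Theorem 4.1, event (i) has probability $\tfrac{1}{n^{l-1}}+O(\tfrac{1}{n^l})$. It therefore suffices to prove that event (ii) has probability $O(\tfrac{1}{n^l})$; the two contributions then combine to give the claimed success probability $1-\tfrac{1}{n^{l-1}}+O(\tfrac{1}{n^l})$.

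To analyse (ii), recall that with the spanning tree $T$ flat, $H$ is $n$ disjoint copies of $T$ (the sections) joined by the fibers of the $l$ non-tree edges, and that a non-tree edge $e_j=(u_j,v_j)$ with permutation $\pi_j$ contributes exactly two boundary edges to section $i$ precisely when $\pi_j(i)\neq i$ (using that $G$ is simple, so $u_j\neq v_j$). I would first choose $\xi(G)$ small, smaller than both $2/|V(G)|$ and the minimum degree $\delta$, so that two easy classes of cuts are automatically non-violating: small sets, whose expansion is bounded below by a constant depending only on $G$ because the covering map is a local isomorphism and, restricting to a connected component of minimal ratio, any such $S$ embeds in a neighbourhood isomorphic to one in $G$; and cuts equal to a single non-isolated section, whose boundary is at least $2\ge \xi(G)|V(G)|$. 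The remaining dangerous cuts are the large ones slicing across many sections.

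For a cut built from a set $A\subseteq[n]$ of $|A|=m$ full sections, one computes that the permutation contribution to the boundary is $2\sum_{j}|A\setminus \pi_j^{-1}(A)|$, so that the expansion of $H$ along such cuts equals $\tfrac{2}{|V(G)|}$ times the edge expansion of the quotient multigraph $\Gamma$ on $[n]$ obtained by superposing the $l$ random permutations. The probability that a fixed $A$ is exactly invariant under all $\pi_j$ is $\binom{n}{m}^{-l}$, and summing over the $\binom{n}{m}$ choices of $A$ and over $m$ reproduces the series $\sum_{1\le m\le n/2}\binom{n}{m}^{1-l}$ of Lemma 3.9—precisely the disconnection term already charged to (i). I would then run a first-moment argument over all cuts, parametrised by the fiber profile $(s_v)_{v\in V(G)}$ with $s_v=|S\cap\mathrm{fib}(v)|$ together with the choice of vertices inside each fiber, bounding the probability that the $l$ permutations move fewer than $\xi(G)|S|$ elements across the partition induced by $S$. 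The binomial factors $\binom{n}{m}^{1-l}$ again supply the decay, and the only term that is not $O(\tfrac1{n^l})$ is the separated $m=1$ single-isolated-section term.

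The main obstacle is this last step for partial-section cuts, where $S$ meets some fibers only in part. Here the internal (tree) cuts and the profile $(s_v)$ interact with the randomness of the $\pi_j$, and one must verify that the combinatorial overcounting of such configurations is dominated by the gain from demanding a strictly positive but sub-$\xi(G)$ boundary, uniformly in the profile. I expect the cleanest bookkeeping to be organised by $m$, the number of sections met by $S$, splitting $E(S,S^c)$ into its internal and permutation parts and using that a sub-$\xi(G)$ total boundary forces each $\pi_j$ to nearly preserve the induced partition. Granting the resulting $O(\tfrac1{n^l})$ estimate for (ii) and combining it with the $\tfrac{1}{n^{l-1}}+O(\tfrac1{n^l})$ estimate for (i) yields the theorem; the same argument applies once $\mathcal{S}_n$ is replaced by the relevant wreath product, giving the extension to iterated lifts.
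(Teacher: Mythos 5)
Your decomposition into (i) disconnection and (ii) connected-but-poorly-expanding is consistent with the theorem, but the proof of the crucial estimate $\Pr[\text{(ii)}]=O(1/n^{l})$ is never actually given: you defer it with ``Granting the resulting $O(1/n^{l})$ estimate for (ii)'' after acknowledging that partial-section cuts are the main obstacle. That case is not bookkeeping; it is essentially the whole theorem. A first-moment bound over cuts must beat the entropy of $\binom{n|V(G)|}{s}$ candidate sets at every scale $s$ up to $n|V(G)|/2$, uniformly in the fiber profile $(s_v)$, and this large-deviations computation is exactly the hard work that the direct approach of Amit and Linial in [2] carries out only to the precision $1-o(1)$, not $1-\frac{1}{n^{l-1}}+O\left(\frac{1}{n^{l}}\right)$. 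Even in the full-section case you claim to have settled, your computation covers only exact invariance of $A$ under all $\pi_j$ (probability $\binom{n}{m}^{-l}$ per set), whereas event (ii) concerns cuts with small \emph{positive} boundary, i.e.\ near-invariance with $|A\,\triangle\,\pi_j(A)|$ small but nonzero --- a strictly larger event whose probability and overcounting you do not control. (Also, your statement that $e_j$ contributes two boundary edges to section $i$ ``precisely when $\pi_j(i)\neq i$'' is correct only for cuts made of full sections, where the right criterion is that $i$ and $\pi_j(i)$ lie on opposite sides of the cut.)

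The paper circumvents all of this with one structural step your proposal lacks: Proposition 4.6 shows \emph{deterministically} that if the walk-subgroup of the lift is $k$-transitive for $k\ge n/3$, then the expansion is at least $\xi(G)=\varepsilon/|V(G)|$ for a fixed $\varepsilon<\frac14$ --- the unbalanced-fiber case is handled by lifting a path of $G$ to $n$ disjoint paths between fibers, and the balanced case by choosing a walk-subgroup element $\sigma$ that moves $\min(t_v,n/3)$ points of $T_v$ out of $T_v$ and following the lifts of its associated walk. Theorem 3.12 then gives that the walk-subgroup is $\mathcal{S}_n$ or $\mathcal{A}_n$ (hence at least $(n-2)$-transitive) with probability $1-\frac{1}{n^{l-1}}+O\left(\frac{1}{n^{l}}\right)$, and the matching upper bound is the observation you also make, that positive expansion implies connectivity, combined with Theorem 4.1. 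In effect your event (ii) is absorbed into the event that the walk-subgroup is transitive but not $\mathcal{S}_n$ or $\mathcal{A}_n$, which Lemmas 3.10 and 3.11 bound by $O(1/n^{l})$ in one line; that group-theoretic shortcut is what makes the sharp probability attainable, and without it (or a completed union-bound argument replacing it) your proof does not close.
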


We make the following connection to the walk--subgroup. 
\begin{prop}
	If $H$ is a random $n$-lift of $G$ and its walk--subgroup is a $k$--transitive subgroup of $\mathcal{S}_n$ for $k \ge n/3$, then there exists a positive constant $\xi(G)$ such that $H$ has expansion at least $\xi(G)$. 
\end{prop}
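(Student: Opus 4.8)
The plan is to translate edge expansion of $H$ into a purely combinatorial statement about the generating permutations, and then to use $k$-transitivity to forbid the ``bad'' configurations. For $S\subseteq V(H)$ write $S_v=\{i:(v,i)\in S\}\subseteq[n]$ and $s_v=|S_v|$, so that $|S|=\sum_v s_v$. A flat edge $e=(u,v)\in T$ joins $(u,i)$ to $(v,i)$, while a non-flat edge $e=(u,v)$ carrying $\sigma_e$ joins $(u,i)$ to $(v,\sigma_e(i))$. Counting the edges of each fiber with exactly one endpoint in $S$ gives the exact decomposition
\[
E(S,S^c)=\sum_{e=(u,v)\in T}|S_u\triangle S_v|+\sum_{e=(u,v)\notin T}\bigl|S_u\triangle \sigma_e^{-1}(S_v)\bigr|.
\]
So I must lower bound the right-hand side by $\xi(G)\sum_v s_v$ uniformly over all $S$ with $|S|\le n|V(G)|/2$.

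First I would use the spanning tree to reduce to a single set. Since $T$ is connected on $|V(G)|$ vertices, if $\sum_{e\in T}|S_u\triangle S_v|$ is already a constant fraction of $\max_v s_v$ we are in good shape; otherwise every $S_v$ lies within small symmetric difference of a common set $A:=S_{v_0}$, and replacing each $S_v$ by $A$ changes $|S|$ and each non-flat contribution only by a controlled additive amount depending on $|V(G)|$ and the accumulated tree-defect. This step should reduce the problem to showing a spectral-gap-type inequality for the generators alone, namely $\sum_{e\notin T}\bigl|A\triangle\sigma_e^{-1}(A)\bigr|\ge \xi'(G)\,|A|$ for every $A$ with $1\le|A|\le n/2$, where $l=|E(G)|-|V(G)|+1$ of the edges are non-flat.

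The role of $k$-transitivity enters through the defect function $d(g):=|A\triangle g^{-1}(A)|$, which satisfies $d(g)=d(g^{-1})$ and the subadditivity $d(gh)\le d(g)+d(h)$ (because $g^{-1}$ preserves cardinalities and commutes with $\triangle$). Since $k$-transitivity with $k\ge n/3$ implies transitivity on $k$-subsets, for any $A$ with $|A|\le n/3$ there is $g$ in the walk-subgroup with $g(A)$ disjoint from $A$, whence $d(g)=|A\triangle g(A)|=2|A|$; the regime $|A|>n/3$ is handled symmetrically by complementation, using that a $k$-transitive group with $k\ge n/3$ is in fact $(n-2)$-transitive, i.e.\ $\mathcal{A}_n$ or $\mathcal{S}_n$, by the reinterpretation of Theorem 3.12. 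Thus in every regime one produces a group element whose defect on $A$ is $\Omega(|A|)$, which is the quantitative replacement for the trivial observation that a transitive walk-subgroup has no proper invariant set (the content of Proposition 4.2).

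I expect the genuine obstacle to be the last link: passing from ``\emph{some} $g$ in the walk-subgroup has defect $\Omega(|A|)$'' to ``\emph{some generator} $\sigma_e$ has defect $\Omega(|A|)$,'' which is what the displayed sum actually measures. Subadditivity only yields $\max_e d(\sigma_e)\ge d(g)/(\text{word length of }g)$, so without a bound on the diameter of the walk-subgroup in these generators the estimate collapses. This is a real difficulty and not a mere technicality: the permutations $\sigma_1=(1\,2\,\cdots\,n)$ and $\sigma_2=(1\,2)$ generate the $n$-transitive group $\mathcal{S}_n$, yet for $A=\{1,\dots,n/2\}$ every generator has defect $\le 2$, so the corresponding lift has expansion $O(1/n)$. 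Hence the implication cannot be purely group-theoretic, and I would expect the proof to close this gap either by restricting to short words (bounding how far $A$ can be moved by a single generator using a stronger, quantitative form of transitivity) or, more plausibly, by reintroducing the randomness of the $\sigma_e$: a first-moment union bound over all cuts shows that a random lift has no almost-invariant set of linear size, with $k$-transitivity from Theorem 3.12 supplying exactly the high-probability event on which the estimate is run. I would develop the union-bound route as the fallback, since it is the step most likely to survive scrutiny.
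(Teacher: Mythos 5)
Your diagnosis of the critical step is exactly right, and you should know that the paper's own proof takes precisely the route you rejected as impossible. After an initial case in which some fiber satisfies $t_u<(1-\varepsilon)m$ (handled, much like your tree-defect reduction, by lifting a simple $u$--$v$ path of $G$ to $n$ edge-disjoint paths of $H$ --- sound, because a walk with no repeated edges does lift to $n$ edge-disjoint walks), the paper uses $k$-transitivity with $k\ge n/3$ to produce a single element $\sigma$ of the walk-subgroup moving $n/3$ indices of $T_v$ outside $T_v$, and then asserts that for each such index the lift of the associated walk of $\sigma$ starting there ``contains a unique edge in $E(T,T^c)$'', i.e.\ it counts one \emph{distinct} boundary edge per lifted walk and concludes $E(T,T^c)\ge n/3$. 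That count presupposes that the $n$ lifts of the associated walk are pairwise edge-disjoint. But the edge-disjointness in the walk-lifting property holds only for walks that do not repeat edges of $G$; an associated walk of a generic $\sigma$ is a word of unbounded length in the finitely many edges of $G$, necessarily repeats edges, and then distinct lifts genuinely share edges. This is your word-length obstruction in topological clothing, and the paper passes over it silently.

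Consequently your counterexample does more than block your own route: it refutes the proposition as the deterministic implication the paper states and later invokes in proving Theorem 4.5. Put $\sigma_1=(1\,2\,\cdots\,n)$ and $\sigma_2=(1\,2)$ on the two non-flat edges of any connected base graph of Betti number $2$, and let $S$ be the union of the sections $1,\dots,n/2$: the walk-subgroup is $\langle\sigma_1,\sigma_2\rangle=\mathcal{S}_n$, hence $n$-transitive and a fortiori $n/3$-transitive, yet flat edges contribute nothing to the cut, $E(S,S^c)=2$, and the expansion is $O(1/n)$, so no $n$-independent $\xi(G)>0$ can exist. (Every lift of an associated walk of any such $\sigma$ does cross this cut, but all the crossings funnel through the same two edges in the fiber of the edge carrying $\sigma_1$ --- exactly the failure of distinctness described above.) Your subadditivity bound $\max_e d(\sigma_e)\ge d(g)/(\text{word length of }g)$ quantifies precisely why no purely group-theoretic argument from $k$-transitivity can succeed, and your fallback --- reintroducing the randomness of the $\sigma_e$ via a first-moment union bound over cuts, essentially the method by which Amit and Linial prove Theorem 2.1 of [2] --- is the only viable repair. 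Note, however, that your proposal stops at announcing that route rather than executing it, so what you have is a correct refutation of the statement as written plus a plausible plan for a corrected probabilistic version, not a completed proof.
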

\begin{proof}
Let $T$ be a subset of vertices of $H$ such that $0 < |T| \le |V(H)|/2$. For a vertex $v$ of $G$, denote the fiber over $v$ by $F_v$, and define $T_v = F_v \cap T$. Also denote $t_v = |T_v|$ and $m = \max_{v \in V(G)}t_v$. Note that $|T|< m|V(G)|$.

Fix $\varepsilon< \frac{1}{4}$. Now suppose that $t_i$ are not all of `similar size'. More precisely, suppose there exists $u$ such that $t_u < (1-\varepsilon)m$. Let $v$ be such that $t_v = m$.  We know that there are $n$ disjoint paths from $F_u$ to $F_v$ in $H$ (using the fact that $G$ is connected and the lifting property of paths), and in particular, at least $\varepsilon m$ of these paths must connect $T_v$ to a vertex outside $T_u$. Then we get \[E(T,T^c) \ge \varepsilon m = \frac{\varepsilon m |V(G)|}{|V(G)|} \ge \frac{\varepsilon |T|}{|V(G)|}\]
and so $\phi(T) \ge \frac{\varepsilon}{|V(G)|}$. Now suppose that $t_u \ge (1-\varepsilon)m$ for all $u \in V(G)$. By the choice of $\varepsilon$ it follows that $m \le 2n/3$. Consider an aribitrary $F_v$. We know that $F_v$ contains at least $n/3$ vertices not in $T$. But we know that there is an element $\sigma$ in the walk subgroup such that $|T_v \cup \sigma(T_v)| = t_v + n/3$ or $2t_v$ (in case $t_v\le n/3$). We will consider the first case, as the calculation for the second case is similar. Then there are $n/3$ indices in $T_v$ such that $\sigma(k)\notin T_v$. For all such indices $k$, the lift of the walk associated with $\sigma$ starting at $k$ contains a unique edge in $E(T,T^c)$. In particular we have \[E(T,T^c)\ge \frac{n}{3} \ge \frac{t_v}{2} \ge \frac{(1-\varepsilon)m}{2}= \frac{(1-\varepsilon)m|V(G)|}{2|V(G)|}\ge \frac{(1-\varepsilon)|T|}{2|V(G)|}\ge \frac{\varepsilon |T|}{|V(G)|}\] 
\end{proof}
\begin{proof}[Proof of Theorem 4.5] Theorem 3.12 combined with the fact the action of $\mathcal{S}_n$ is $n$-transitive and the action of $\mathcal{A}_n$ is $(n-2)$-transitive implies the lower bound on the probability in Theorem 4.5. The upper bound on the probability holds because strictly positive edge expansion implies connectedness, and the probability of connectedness in Theorem 4.1 matches the lower bound.\end{proof}

\subsubsection{$\delta$-Connectivity}

We have shown that not only are random lifts connected with high probability, but that large sets have large boundaries. We can use this to prove the following theorem about $\delta$-edge connectivity, which we will simply call $\delta$-connectivity.

\begin{thm}[$\delta$-Connectivity]

	Let $G$ be a simple connected graph with minimum degree $\delta \ge 5$. There exists $\gamma(\delta) >0$ which is strictly increasing in $\delta$ such that the probability that a random $n$-lift of $G$ is $\delta$-connected is at least $1 - O\left(\frac{1}{n^{\gamma(\delta)}}\right)$, given that $n > (\delta-1)^6|V(G)|^5$. 

\end{thm}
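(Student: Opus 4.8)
The goal is $\delta$-vertex-connectivity: to rule out every vertex set $W$ with $|W|\le\delta-1$ whose deletion disconnects the lift $H$. Writing $\partial_V A:=N_H(A)\setminus A$ for the vertex boundary, this amounts to showing that a.a.s.\ no set $A$ has both $A$ and its exterior $V(H)\setminus(A\cup\partial_V A)$ nonempty while $|\partial_V A|\le\delta-1$. The first step is to reduce to \emph{bounded} $A$ via Theorem 4.5: off an event of probability $O(n^{-(l-1)})$ the lift has edge expansion at least $\xi(G)$, and then the smaller side $A$ of any such separator satisfies $\xi(G)\,|A|\le e(A,A^c)=e(A,\partial_V A)\le(\delta-1)\Delta$, where $\Delta$ is the maximum degree of $G$; hence $|A|\le M:=(\delta-1)\Delta/\xi(G)$, a bound independent of $n$. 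As the proof of Theorem 4.5 yields $\xi(G)=\Theta(1/|V(G)|)$, the hypothesis $n>(\delta-1)^6|V(G)|^5$ forces $M<n/2$, so $A$ is genuinely the small side.

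The heart of the matter is a first--moment estimate over these small sets, and the dominant --- and cleanest --- case is $s:=|A|=2$. Because $G$ is simple, every vertex of $H$ has at least $\delta$ \emph{distinct} neighbours, lying in distinct fibers; in particular a single vertex cannot be separated, since its $\ge\delta$ neighbours already exceed $\delta-1$. For an adjacent pair $\{x,y\}$ --- of which there are $|E(G)|\cdot n$, one for each base edge and index --- the event $|\partial_V\{x,y\}|\le\delta-1$ forces the $\ge\delta-1$ further neighbours of $y$ to coincide with those of $x$, i.e.\ it imposes at least $\delta-1$ independent identifications of the form $\pi_e(i)=\pi_{e'}(i')$ between the random permutations, each of probability $O(1/n)$. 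Hence the expected number of such dangerous pairs is at most
\[
|E(G)|\cdot n\cdot O\!\left(n^{-(\delta-1)}\right)=O\!\left(n^{2-\delta}\right).
\]

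For general $2\le s\le M$ the same bookkeeping (realise a spanning tree of $A$, then collapse its remaining external incidences onto $\le\delta-1$ vertices) gives $\mathbb{E}[\,\#\{A:|A|=s,\ |\partial_V A|\le\delta-1\}\,]=O\!\left(n^{-(\delta-2)(s-1)}\right)$, so for $\delta\ge5$ the contributions decrease geometrically in $s$ and the sum is controlled by the $s=2$ term $O(n^{2-\delta})$. Adding the expansion-failure term $O(n^{-(l-1)})$ --- negligible since $l-1=|E|-|V|\ge(\tfrac{\delta}{2}-1)|V|\gg\delta-2$ --- proves the theorem with $\gamma(\delta)=\delta-2$, which is strictly increasing and positive for $\delta\ge5$; the hypothesis $n>(\delta-1)^6|V(G)|^5$ also absorbs the implied constants, which are of size $|V(G)|^{O(\delta)}$ arising from the number of coincidence patterns.

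The step I expect to be hardest is precisely this uniform control of the larger sets. One must (i) show that $|\partial_V A|\le\delta-1$ genuinely costs the claimed number of \emph{independent} factors $1/n$, carefully handling identifications that share a single random permutation $\pi_e$ and those involving a flat edge (where $\pi_e=\mathrm{id}$ makes the coincidence deterministic); and (ii) prevent the first moment from diverging for large $s$, where an internally dense $A$ has a small external boundary and the crude count $\binom{s}{2}$ on its internal edges would overwhelm the $\binom{n|V(G)|}{s}$ choices. The resolution is that $H$, being a covering of a simple graph, inherits the bounded local density of $G$ (any $s$-set spans at most $e\bigl(G[\mathrm{proj}(A)]\bigr)$ edges), and that edge expansion has already capped $|A|\le M$; together these keep the exponent $-(\delta-2)(s-1)$ negative throughout the range and pin down $\gamma(\delta)=\delta-2$.
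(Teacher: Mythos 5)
Your proposal is correct in outline, and it takes a genuinely different route from the paper's. A definitional point first: the paper explicitly declares that ``$\delta$-connectivity'' here means $\delta$-\emph{edge}-connectivity, so your vertex-connectivity statement is strictly stronger (by Whitney's inequality $\kappa\le\lambda$) and implies the theorem. The paper's proof (Proposition 4.9) never caps the bad set's size via expansion; instead it conditions on the walk-subgroup being $\delta$-transitive (via Theorem 3.12) and treats arbitrary $T$ with $|T|\le|V(H)|/2$ directly: unbalanced fiber intersections are handled by the $n$ lifted edge-disjoint paths, a fiber with $t_v\ge\delta$ is handled by $\delta$ edge-disjoint escaping paths supplied by $\delta$-transitivity, and only the remaining case $t_u\le\delta-1$ for all $u$ --- which automatically caps $|T|\le(\delta-1)|V(G)|$ --- gets a first-moment treatment: a connected bad $K$ of size $k$ must carry at least $1.3k$ lifts of non-flat edges (via the same flat-forest argument you need), each costing $(\delta-1)/n$, set against the crude count $\binom{n|V(G)|}{k}$, netting $n^{-.05k}$ and hence a $\gamma(\delta)$ of order only a small constant times $\delta$. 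You instead invoke Theorem 4.5 as a black box to cap $|A|\le M=O_G(\delta)$, a constant in $n$, and then count coincidences locally. Two remarks on your flagged ``hardest steps'': worry (ii) about divergence in $s$ is moot once $|A|\le M$, since the union over sizes then has boundedly many terms and you only need \emph{some} negative power of $n$ per size, not the exact exponent $(\delta-2)(s-1)$ --- which is nonetheless essentially right, because even in the worst flat-heavy configurations (e.g.\ $A=S\times\{i\}$ with $S$ a branch of the spanning tree) the free flat-edge merges cost only an additive $O_G(1)$, leaving $(s-1)(\delta-2)$ genuine coincidences; and worry (i) is dispatched by a routine sequential-revealment argument, since each new constraint fixes a fresh value of some uniform permutation given $O_G(1)$ previously exposed values, and the flat edges form a forest, so at most one merge per local pattern can be flat--flat and hence must still pass through a fresh non-flat permutation. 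What each route buys: yours gives an explicit, much larger exponent $\gamma(\delta)=\Theta(\delta)$ (the constants of size $|V(G)|^{O(\delta)}$ degrade $\delta-2$ somewhat even under $n>(\delta-1)^6|V(G)|^5$, but an increasing $\gamma=\Theta(\delta)$ survives, which is all the statement requires) plus the stronger vertex version; the paper's route stays self-contained within the walk-subgroup framework and avoids the expansion theorem as an intermediary (though both arguments ultimately rest on Theorem 3.12), at the price of the crude $\binom{n|V(G)|}{k}$ count, which is precisely why it needs an excess linear in $k$ and ends with a much weaker $\gamma$.
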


First we show that if we desire a non-trivial bound which works for all simple connected graphs with a fixed minimum degree, we must impose a condition on $n$ in terms of $\delta$. Consider the following example:

\begin{ex}
	The barbell graph $B_k$ consists of two cliques of k+1 vertices connected by a single edge called the bridge. This graph has minimum degree k. However, no $n$-lift of $B_k$, for $n<k$ is $k$-connected. This is because the bridge has only n copies, and cutting these n copies disconnects the graph.
\end{ex}

\FloatBarrier
\begin{figure}[!h]
\centering
\subfigure[]{
  \includegraphics[width=.55\linewidth]{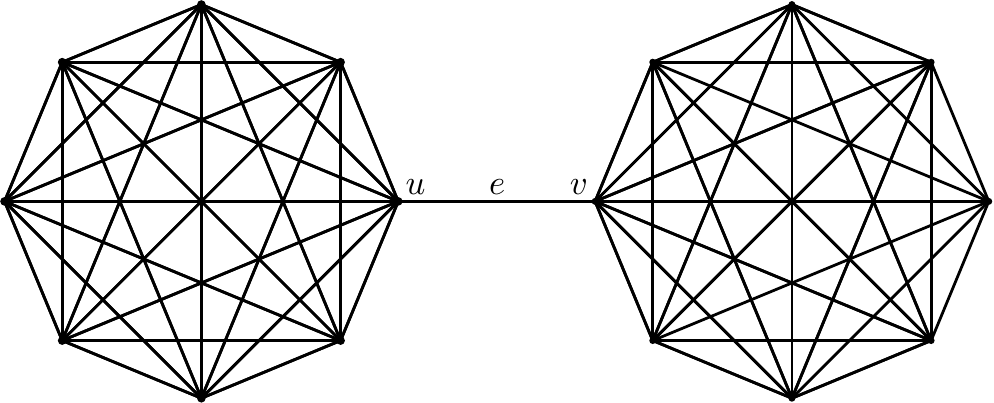}
  \label{fig:barbell}}\hspace{.5cm}
  \subfigure[]{
 \includegraphics[width=.35\linewidth]{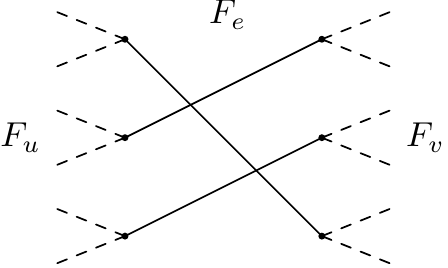}
 \label{fig:fiberbar}}
 \caption[Coverings of the Barbell Graph]{\ref{fig:barbell} is the graph $B_7$ and \ref{fig:fiberbar} shows the fiber of bridge edge $e$, denoted $F_e$, which connects the fibers of $u$ and $v$ in $B_7$, denoted $F_u$ and $F_v$ respectively. Note that no 3-lift of \ref{fig:barbell} can be 7-connected as one can simply cut every edge in $F_e$ to disconnect the graph.}
  \end{figure}
\FloatBarrier

This tells us we need $n$ to be large enough in terms of $\delta$ for $\delta$-connectivity to be possible, and the condition in our theorem, $n \ge (\delta-1)^6|V(G)|^5$, while not tight, is not a mere artifact of the proof strategy.

\begin{prop}
	Let $H$ be a random $n$-lift of $G$ where $\delta \ge 5$ is the minimum degree of $G$ and $n > (\delta-1)^6|V(G)|^5$. If the walk--subgroup of $H$ is a $\delta$-transitive subgroup of $\mathcal{S}_n$, then there exists $\gamma(\delta) >0$ which is strictly increasing in $\delta$ such that the probability that $H$ is $\delta$-connected is at least $1 - O\left(\frac{1}{n^{\gamma(\delta)}}\right)$.
\end{prop}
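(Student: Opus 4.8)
The plan is to show that, conditioned on $\delta$-transitivity of the walk-subgroup, no edge cut $(T,T^c)$ with $|T|\le|V(H)|/2$ has fewer than $\delta$ crossing edges. Writing $t_v$ for the number of vertices of $T$ in the fiber $F_v$, I would split the possible cuts into three regimes and dispose of two of them deterministically. First, the \emph{balanced} regime: if some fiber has $\delta\le t_v\le n-\delta$ (possible since $n>2\delta$), pick distinct $a_1,\dots,a_\delta$ with $(v,a_r)\in T$ and distinct $b_1,\dots,b_\delta$ with $(v,b_r)\notin T$. By $\delta$-transitivity there is a $\sigma$ in the walk-subgroup with $\sigma(a_r)=b_r$ for every $r$; taking an associated walk $w$ of $\sigma$ and lifting it from each $(v,a_r)$, the walk-lifting property yields $\delta$ pairwise edge-disjoint walks, the $r$-th running from $(v,a_r)\in T$ to $(v,b_r)\in T^c$. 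Each must cross the cut, so $E(T,T^c)\ge\delta$. Hence I may assume every fiber is \emph{light} ($t_v<\delta$) or \emph{heavy} ($t_v>n-\delta$).

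Next, the \emph{mixed} regime. If a light vertex and a heavy vertex both occur, connectedness of $G$ furnishes an edge $e$ joining a light $u$ to a heavy $v$; its fiber contributes $|T_u\triangle\pi_e^{-1}(T_v)|\ge t_v-t_u>(n-\delta)-\delta>\delta$ crossing edges, so again $E(T,T^c)\ge\delta$. The all-heavy case cannot occur, since it would force $|T^c|<\delta|V(G)|$, contradicting $|T|\le|V(H)|/2$ once $n>2\delta$. This leaves only the all-light case, where $T$ is a small set with $|T|<\delta|V(G)|$, and everything reduces to a single probabilistic statement.

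That statement is: with probability $1-O(n^{-\gamma(\delta)})$ no nonempty $S\subseteq V(H)$ with $|S|<\delta|V(G)|$ satisfies $E(S,S^c)<\delta$. Passing to a connected component I may take $S$ connected; as every vertex of $H$ has degree at least $\delta$, the bound $E(S,S^c)<\delta$ forces $H[S]$ to have excess (cyclomatic number) $k>(\delta-2)(|S|-1)/2$, i.e. to be an abnormally dense small subgraph. I would then run a first-moment bound on connected subgraphs of $H$ with $s$ vertices and excess $k$. Choosing a spanning tree of such a subgraph to contain all lifts of flat edges (these form a forest, so this is always possible) makes each of the $k$ surplus edges a genuine non-flat matching constraint, satisfied with probability $\approx 1/n$; summing over the $O(n)$ choices of root and over combinatorial patterns gives expected count at most $\mathrm{poly}(G,\delta)^{\,s+k}\,n^{1-k}$. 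Summing over $s<\delta|V(G)|$ and $k>(\delta-2)(s-1)/2$, the hypothesis $n>(\delta-1)^6|V(G)|^5$ makes the series converge and be dominated by its smallest admissible $(s,k)$, giving $O(n^{-\gamma(\delta)})$ with $\gamma(\delta)>0$ strictly increasing in $\delta$. Finally, conditioning on $\delta$-transitivity (an event of probability $1-O(n^{-(l-1)})$ by Theorem 3.12) inflates each estimate only by a factor $1+o(1)$, so the conditional failure probability is still $O(n^{-\gamma(\delta)})$.

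The hard part will be this small-set first-moment estimate: correctly enumerating the subgraph patterns, cleanly separating the deterministic (flat) contributions from the random (non-flat) ones so that the excess really yields $k$ independent factors of $1/n$, and verifying that $n>(\delta-1)^6|V(G)|^5$ is large enough to beat the pattern-counting growth and produce a $\gamma(\delta)$ that is positive and increasing in $\delta$. The two group-theoretic reductions, by contrast, are deterministic once $\delta$-transitivity is assumed and should be routine.
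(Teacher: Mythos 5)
Your proposal is correct and shares the paper's two-phase architecture (deterministic reductions via walk-lifting and $\delta$-transitivity, then a first-moment bound on small dense connected subgraphs), but both phases are organized differently. The paper first forces fibers to be nearly balanced by noting that $|t_u-t_v|\ge\delta$ already yields $\delta$ crossing edges via $n$ lifted edge-disjoint paths, and only then invokes $\delta$-transitivity when some $t_v\ge\delta$; your light/medium/heavy trichotomy is an equivalent but cleaner decomposition, and your explicit dismissal of the all-heavy case replaces a step the paper leaves implicit in its ``balanced fibers'' assumption. In the residual all-light case ($t_v\le\delta-1$ everywhere), the paper runs a case analysis on the number $h$ of non-empty fibers, handling $h=1$, $2\le h\le\delta-1$, and $h=\delta$ deterministically before the probabilistic case $h>\delta$; your unified first-moment bound over all connected $S$ with $|S|<\delta|V(G)|$ subsumes those deterministic cases (they correspond to infeasible or zero-count patterns in a lift of a simple graph, e.g.\ no multi-edges or loops), which is a genuine simplification. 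Your excess bound $k>(\delta-2)(s-1)/2$ is exactly the paper's count --- its ``$1.3k$ edges beyond a spanning tree'' is the specialization $\delta=5$ --- and your extend-the-flat-forest-to-a-spanning-tree argument is a cleaner rendering of the paper's proof by contradiction that all surplus edges are lifts of non-flat edges. The accounting then diverges: the paper bounds the probability per \emph{vertex subset} of size $k$ by $\left(\frac{\delta-1}{n}\right)^{1.3k}$, handling $m$ surplus edges above one base edge via the product $\frac{\delta-1}{n}\cdot\frac{\delta-2}{n-1}\cdots$, and union-bounds over $\binom{n|V(G)|}{k}$ subsets to get $n^{-.05k}$ under $n>(\delta-1)^6|V(G)|^5$; your rooted-pattern count $C^{s+k}n^{1-k}$ is the sharper standard form but carries exactly the exposure/conditioning burden you flag as the hard part (the paper's cruder subset count avoids pattern enumeration at the price of a weaker exponent). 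Finally, your explicit conditioning step, dividing by $P(\delta\text{-transitive})=1-o(1)$, addresses a point the paper silently elides, so on that detail your write-up is more careful than the original.
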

\begin{proof}
Let $T$ be a subset of vertices of $H$ such that $0 < |T| \le |V(H)|/2$. For a vertex $v$ of $G$, denote the fiber over $v$ by $F_v$, and define $T_v = F_v \cap T$. Also denote $t_v = |T_v|$.

First we reduce the problem to the case when the fibers over every point are roughly evenly distributed. Suppose that there exist $u,v \in G$ such that $|t_u - t_v| \ge \delta$. Then consider a path in $G$ which connects $u$ to $v$. It lifts to $n$ edge-disjoint paths in $H$ which connect $F_u$ to $F_v$, implying that $E(T,T^c) \ge \delta$. So we need only consider the case when $|x_u - x_v| \le \delta - 1$ for all $u,v \in G$. Suppose our subset of vertices contains a fiber $T_v$ such that $t_v \ge \delta$. We can assume that $|F_v \setminus T_v| \ge \delta$, because $n >>4\delta$ implies there cannot exist a single fiber such that $|F_v \setminus T_v| \le \delta$ since we are considering only sets with somewhat `balanced' fibers. Since we assumed the walk subgroup to be $\delta$-transitive, there is a loop based at $v$ in $G$ which corresponds to $\sigma$ in the walk-subgroup which lifts to $\delta$ edge disjoint paths which take $\delta$ points in $T_v$ to $\delta$ distinct points in $F_v\setminus T_v$. This implies that the boundary of such a set is at least $\delta$.   

Now we consider the only remaning case: when $t_u \le \delta-1$ for all $u \in V(G)$. These sets require the careful analysis of several cases. The first three cases show that such sets of vertices spread across a small number of fibers cannot have small boundary. The tedious case is the fourth, which (loosely) bounds the probability that the rest of such possible sets have small boundary. The argument is as follows: suppose such a set has small boundary, then it is enough to consider the case that it is a connected subgraph of $H$. In fact, we show it must be a subgraph which contains a large number of cycles, and therefore a large number of edges in $H$ which are lifts of (not necessarily distinct) non-flat edges in $G$. Since many edges in such graphs need a random permutation to take them to the correct spot (in order to complete the necessary number of cycles), they occur with low probability. Let $h$ be the number of non-empty fibers, 

\begin{itemize}[leftmargin=*]
	\item[1.] Suppose $h=1$. Since fibers are totally disconnected and the minimum degree of any vertex is $\delta$, the size of the boundary must be at least $\delta$.

	\item[2.] Suppose $2 \le h \le \delta-1$. Then we know that each vertex in this set must have at least $\delta - h + 1$ edges leaving the set. This is because each vertex can at best connect to $h-1$ other fibers (all of the fibers excluding itself). So the size of the boundary is at least $h(\delta - h + 1)$. This is minimized as a function of $h$ in the given range when $h=2$, giving us that the size of the boundary is at least $2(\delta-1) \ge \delta$. 

	\item[3.] Let $h=\delta$. In this case, each vertex has at least one edge leaving $K$, and there are at least $\delta$ vertices. So the boundary must be $\ge \delta$.

	\item[4.] Now let $h > \delta$. We may assume that such a set $K$ (of size $k$), is a connected subgraph of $H$, since disconnected subgraphs have a boundary greater than or equal to the boundary of any of the components. We first show that any $K$ with boundary $< \delta$ must have at least $1.3k$ edges more than vertices. The vertices of $K$ have minimum degree $\delta$, implying that the total degree of $K$ is at least $k\delta$. Since $K$ is connected it has a spanning tree with $k-1$ edges, which contributes $2k-2$ to the total degree of $K$. Of the remaining $k\delta - 2k + 2$ total degree, at least $(k-1)\delta - 2k + 3$ must be accounted for by edges that connect back into the graph. This is because at most $\delta-1$ go outside $K$ by assumption. By eliminating the double counting of edges that stay within $K$, the total number of non-spanning tree edges in $K$ is at least \[\frac{(k-1)\delta - 2k + 3}{2} \ge \frac{5k-5-2k +3}{2} \ge 1.5k - 1 \ge 1.3k\] where we use that $k > \delta \ge 5$. Note that this quantity strictly increases with $\delta$. We continue the rest of the proof for $\delta=5$ which is the worst case covered by our theorem, and it is clear that larger $\delta$ will result in better versions of the bounds to follow. We argue that since $K$ has at least $1.3k$ edges in excess of a spanning tree, it must have at least $1.3k$ edges which are lifts of (not necessarily distinct) non-flat edges in $G$. For the sake of contradiction suppose that $K$ has less than $1.3k$ non-flat edges. Then upon deleting them, we are left with lifts of flat edges only, but more edges than in a spanning tree of $K$. That means that we must have at least one cycle in $K$, which must come from a cycle in $G$. But a cycle in $G$ must contain at least one non-flat edge, and therefore $K$ must still contain at least one edge which is a lift of a non-flat edge.

Now suppose that $m$ of these edges lie above a single edge in $G$ (note that $m \le \delta-1$). The probability that a random permutation takes them to the correct points in their destination fiber to keep them within the subgraph is less than \[\frac{\delta-1}{n}\times \frac{\delta-2}{n-1} \times \dots \times \frac{\delta-m-1}{n-m} \le \left( \frac{\delta-1}{n} \right)^m\] where the inequality follows since $n >> \delta$. Now notice that lifts of different non-flat edges of $G$ are independent, which combined with the previous observation gives us that the probability that the necessary $1.3k$ edges stay within the subgraph is less than $\left( \frac{\delta-1}{n} \right)^{1.3k}$.

This shows us that the probability that a connected subgraph of $k$ vertices has a small boundary is very small. The total number of such subgraphs is certainly less than the number of sets of vertices of size $k$, ${n|V(G)| \choose k} = O(|V(G)|^kn^k)$. So by the union bound, the probability that any such subgraph of size $k$ exists is certainly on the order of \begin{equation}\label{equation:replace}|V(G)|^kn^k\left(\frac{\delta-1}{n}\right)^{1.3k} =  |V(G)|^{k}\frac{(\delta-1)^{1.3k}}{n^{.3k}} < \frac{1}{n^{.05k}}  \end{equation} where the second inequality uses the fact that $n > (\delta-1)^6|V(G)|^5$.
\end{itemize}
Finally through union bound, the probability that any bad subgraph of any size exists is less than \[\sum_{i=\delta}^{(\delta-1)|V(G)|}\frac{1}{n^{.05i}} < (\delta-1)|V(G)|\frac{1}{n^{.05\delta}} < \frac{1}{n^{.05}}\] where we again use the fact that $n > (\delta-1)^6|V(G)|^5$. This completes the proof.
\end{proof}

\begin{proof}[Proof of Theorem 4.7] Since for $\delta \ge 5$ the number of non-flat edges is much greater than $\delta$, then Theorem 3.12 and Proposition 4.9 imply Theorem 4.7 through the union bound. \end{proof}

The following theorem shows $\delta$-connectivity in $n$-lifts of families of graphs where $\delta$ grows slowly enough as a function of the degree of the lift and the number of vertices. 
\begin{thm}
	There exists $\gamma' >0$ such that for all $5 \le \delta(n,k) \le O\left( \frac{n^{\gamma'}}{k} \right)$, a random $n$-lift of any connected simple graph with on $k$ vertices with minimum degree $\delta(n,k)$ is a.a.s. $\delta$-connected.
\end{thm}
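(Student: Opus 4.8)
The plan is to observe that the proof of Proposition 4.9 never actually uses that $\delta$ is constant: the only place where the size of $\delta$ relative to $n$ and $k=|V(G)|$ enters is through the standing hypothesis $n>(\delta-1)^6k^5$, which is exactly what closes every inequality in the four cases (and the proof even remarks that larger $\delta$ only sharpens the bounds). So I would prove Theorem 4.10 by re-reading that hypothesis as a growth restriction on $\delta$ as a function of $n$ and $k$, and separately checking that the walk-subgroup remains $\delta$-transitive with high probability.

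Concretely, I would set $\gamma'=1/6$ (any strictly smaller positive constant works with room to spare). The elementary step is that $\delta\le Cn^{\gamma'}/k$ forces $n>(\delta-1)^6k^5$: since $k\ge 1$ we have $k^{-1}\le k^{-5/6}$, so $\delta-1<\delta\le Cn^{1/6}k^{-1}\le Cn^{1/6}k^{-5/6}$, whence $(\delta-1)^6k^5\le C^6nk^{-1}\le C^6n<n$ once $C<1$. This rules out the barbell-type obstruction of Example 4.8 and makes Proposition 4.9 applicable essentially verbatim.

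It then remains to supply the hypothesis of Proposition 4.9, that the walk-subgroup is $\delta$-transitive. Because $G$ is connected with minimum degree $\delta\ge 5$ on $k$ vertices, its Betti number satisfies $l\ge k\delta/2-k+1\ge 2$, so Theorem 3.12 shows the walk-subgroup contains $\mathcal{A}_n$, hence is $(n-2)$-transitive, with probability $1-O(1/n^{l-1})$. Since $\delta\le Cn^{1/6}/k\ll n-2$, being $(n-2)$-transitive implies being $\delta$-transitive, so the hypothesis fails only on an event of probability $o(1)$. Combining this with Proposition 4.9, whose failure probability is $O(1/n^{\gamma(\delta)})$ with $\gamma(\delta)\ge\gamma(5)=0.05$, the total failure probability is $O(1/n^{0.05})$ uniformly in the admissible range of $\delta$ and $k$, and so tends to $0$; this is the a.a.s.\ claim.

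The step I expect to be the main obstacle is confirming that the estimates of Case 4 in Proposition 4.9 survive \emph{uniformly} as $\delta$ and $k$ grow with $n$, rather than merely for a single fixed $\delta$. There one balances the per-subgraph probability $((\delta-1)/n)^{1.3s}$ for a connected subgraph of size $s$ against the $(nk)^s$ candidate subgraphs, and then sums over sizes $s$; larger $\delta$ increases the excess-edge exponent in one's favour but also appears in the base $(\delta-1)$ and widens the range of the outer sum. The content of $n>(\delta-1)^6k^5$ is precisely that this trade-off stays on the right side, and the delicate point is to verify that the resulting $o(1)$ rate is independent of how $\delta(n,k)$ and $k$ are chosen within the stated window, so that a single a.a.s.\ statement is legitimate.
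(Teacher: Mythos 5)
Your proposal is correct and takes essentially the same route as the paper, which likewise proves the theorem by reusing Proposition 4.9 together with Theorem 3.12 for the $\delta$-transitivity of the walk-subgroup. The only difference is cosmetic: the paper picks $\gamma' = .19$ and must therefore rerun the estimate from equation (\ref{equation:replace}) directly (since that exponent no longer guarantees $n > (\delta-1)^6 k^5$ when $k$ is small), whereas your choice $\gamma' = 1/6$ restores the hypothesis $n > (\delta-1)^6 k^5$ and lets you invoke the proposition verbatim --- which also resolves the uniformity worry you raise, and either constant suffices for the purely existential statement.
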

\begin{proof}
	The proof is the same as Theorem 4.7 using Proposition 4.9: use $\gamma' = .19$ and follow the proof from (\ref{equation:replace}) to show a $O\left(\frac{1}{n^{.06}}\right) = o_n(1)$ probability that $\delta$-connectivity fails.
\end{proof}

\subsubsection{Iterated Random Lifts}
All our results can be generalized to iterated random lifts by simply viewing iterated random lifts as a sequence of lifts. We show an analogue of Theorem 4.1 only, but the results about edge expansion and $\delta$-connectivity hold as well.

\begin{thm}
	Let $G$ be a simple connected graph with $l-1$ more edges than vertices (l $\ge$ 1), then an iterated random $n_k\dots n_1$-lift is connected with probability 
	\[\left(1 -  \frac{1}{n_1^{l-1}} + O\left(\frac{1}{n_1^{l}}\right)\right)\prod_{i=2}^{k} \left(1 -  \frac{1}{n_i^{(l-1)(\prod_{j=1}^{i-1}n_j)}} + O\left(\frac{1}{n_i^{(l-1)(\prod_{j=1}^{i-1}n_j)+1}}\right)\right)\]
\end{thm}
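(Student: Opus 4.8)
The plan is to use the sequential description of iterated lifts recorded in Section 3.3: realize the iterated random $n_k\dots n_1$-lift as a tower $G_k \to \dots \to G_1 \to G$, in which $G_1$ is a random $n_1$-lift of $G$ and, for $i \ge 2$, $G_i$ is a random $n_i$-lift of $G_{i-1}$ using permutations drawn independently of the earlier stages. Let $C_i$ be the event that $G_i$ is connected. The first step is the observation that connectivity propagates down the tower: each covering map is surjective, so projecting a path in a connected $G_i$ exhibits a walk between any two vertices of $G_{i-1}$, giving the nesting $C_k \subseteq C_{k-1}\subseteq\dots\subseteq C_1$. Hence $C_1\cap\dots\cap C_k = C_k$, and the chain rule telescopes to
\[ P(C_k) = P(C_1)\prod_{i=2}^{k} P(C_i \mid C_{i-1}), \]
since the nesting makes $C_1\cap\dots\cap C_{i-1}=C_{i-1}$.

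The core of the argument is to evaluate each conditional factor by reducing it to Theorem 4.1. Conditioned on $C_{i-1}$, the graph $G_{i-1}$ is connected and simple --- lifts of simple connected graphs are again simple, since distinct lift-vertices receive at most one joining edge and no loops arise --- so a further random $n_i$-lift of it is governed by Theorem 4.1 with base $G_{i-1}$. Because that theorem depends on the base only through its Betti number, I next compute the Betti number of $G_{i-1}$ on the event $C_{i-1}$. An $n$-lift scales both $|V|$ and $|E|$ by $n$, so a connected $n$-lift of a connected graph of Betti number $b$ has Betti number $n(b-1)+1$; iterating from $l$ gives
\[ l_i = (l-1)\prod_{j=1}^{i} n_j + 1, \]
and in particular $l_{i-1}-1=(l-1)\prod_{j=1}^{i-1} n_j$. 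The crucial feature is that this value is the \emph{same} for every connected realization of $G_{i-1}$, so the conditional probability $P(C_i\mid G_{i-1})$ is constant on $C_{i-1}$ and therefore equals $P(C_i\mid C_{i-1})$.

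Substituting the Betti number into Theorem 4.1 then yields, for $i\ge 2$,
\[ P(C_i\mid C_{i-1}) = 1 - \frac{1}{n_i^{(l-1)\prod_{j=1}^{i-1} n_j}} + O\!\left(\frac{1}{n_i^{(l-1)\prod_{j=1}^{i-1} n_j + 1}}\right), \]
with the base case $P(C_1) = 1 - n_1^{-(l-1)} + O(n_1^{-l})$ coming from $i=1$. Inserting these factors into the telescoping product reproduces the claimed expression term for term.

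I expect the only real obstacle to be the rigorous handling of the conditioning rather than the arithmetic. Two points carry the weight: first, the Markov structure of the tower, which lets me treat $G_i$ as a genuine independent random lift of $G_{i-1}$ given $G_{i-1}$ and hence apply Theorem 4.1 to each stage; and second, the fact that $P(C_i\mid G_{i-1})$ is constant across the connected realizations of $G_{i-1}$, which is precisely what the Betti-number-only dependence of Theorem 4.1 --- together with the fact that connectivity alone pins down the Betti number --- provides. Once these are established the proof is a one-line substitution.
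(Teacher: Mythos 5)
Your proposal is correct and follows essentially the same route as the paper: decompose the iterated lift as a tower of random lifts, apply Theorem 4.1 at each stage with the Betti number $(l-1)\prod_{j=1}^{i-1} n_j + 1$ of the previous stage, and multiply the stagewise probabilities. If anything, your chain-rule formulation with the nested events $C_k \subseteq \dots \subseteq C_1$ and the observation that $P(C_i \mid G_{i-1})$ is constant on connected realizations is a more careful rendering of what the paper compresses into the phrase ``by independence.''
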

\begin{proof}
	Following our discussion of iterated random lifts: an iterated random $n_k\dots n_1$-lift is a random $n_k$-lift of an iterated random $n_{k-1}\dots n_1$-lift and so on, beginning with a random $n_1$-lift of $G$. By independence, the probability that an iterated random $n_k\dots n_1$-lift is connected is just the product of the probabilities that each graph in its iterated construction is connected. We can calculate this probability for each graph in the iterated construction using Theorem 4.1. Since $G$ has $l-1$ more edges than vertices, we can easily calculate that an iterated $n_i \dots n_1$-lift of $G$ has $(l-1)n_1\dots n_i$ more edges than vertices, and the result follows.\end{proof}

\subsection{Homotopy Invariants in Random Lifts}
The results of Theorems 4.1 and 4.5 apply to random lifts of graphs which are not simple, that assumption was needed only in the proofs pertaining to $\delta$-connectivity but adopted throughout for continuity. Lifts of graphs are topological covering spaces and it is well known that covering spaces of homotopy equivalent spaces have some similarities. We show that there exist graphical properties of random lifts whose probability only depends on the homotopy type of their base graph, that is, properties whose probabilities are homotopy  invariant.

It is not straightforward in general to determine whether two spaces are homotopy equivalent, however, the situation is easy for graphs. It is easy to show that two connected graphs are homotopy equivalent if and only if the number of edges minus the number of vertices is the same. Using this we show that the probability that a random lift of $G$ is connected or has edge expansion bounded below by $\xi(G)$ (from Theorem 4.5) only depends on the homotopy type of $G$. While random models of covering spaces have only been studied for graphs so far, we expect such invariants to exist for any model of random covering spaces of topological spaces. Though we only state the following results for random lifts, similar properties of iterated random lifts can be shown by the same method.

\begin{thm}
	Let $G$ and $H$ be connected graphs. Then the probabilities that a random $n$-lift of $G $ and a random $n$-lift of $H$ are connected are equal if and only if $G$ is homotopy equivalent to $H$. 
\end{thm}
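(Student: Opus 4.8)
The plan is to reduce both directions to a single fact already established in the excerpt. By Lemma 4.3, the probability that a random $n$-lift of a connected graph $G$ is connected depends on $G$ only through the integer $l(G) = |E(G)| - |V(G)| + 1$ (the Betti number, equivalently the number of non-flat edges), and it equals the probability, which I will denote $P(l,n)$, that $l$ random permutations generate a transitive subgroup of $\mathcal{S}_n$. I would combine this with the graph-homotopy fact quoted just before the statement, namely that two connected graphs $G$ and $H$ are homotopy equivalent if and only if $|E(G)| - |V(G)| = |E(H)| - |V(H)|$, i.e. if and only if $l(G) = l(H)$.

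With this setup the forward direction is immediate: if $G$ and $H$ are homotopy equivalent then $l(G) = l(H)$, so both connectivity probabilities equal the same number $P(l(G),n)$. The content is entirely in the converse, where I must show that equal connectivity probabilities force $l(G) = l(H)$, after which the homotopy fact gives homotopy equivalence. Since the two probabilities are $P(l(G),n)$ and $P(l(H),n)$, it suffices to prove that for each fixed $n \ge 2$ the map $l \mapsto P(l,n)$ is injective.

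I would establish injectivity by proving the stronger claim that $l \mapsto P(l,n)$ is strictly increasing. Monotonicity $P(l,n) \le P(l+1,n)$ follows from a coupling: sample $\sigma_1,\dots,\sigma_{l+1}$ once and observe that the event that $\langle \sigma_1,\dots,\sigma_l\rangle$ is transitive is contained in the event that $\langle \sigma_1,\dots,\sigma_{l+1}\rangle$ is transitive, since a group containing a transitive subgroup is itself transitive. Strictness comes from exhibiting a positive-probability event in the difference of these two events: conditioning on $\sigma_1 = \dots = \sigma_l = \mathrm{id}$, which has probability $(1/n!)^{l} > 0$, the first $l$ generators fail to be transitive for $n \ge 2$, whereas the full set is transitive exactly when $\sigma_{l+1}$ is an $n$-cycle, which occurs with probability $1/n > 0$. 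Hence $P(l+1,n) - P(l,n) > 0$ for every $n \ge 2$, giving injectivity.

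The hard part is exactly this injectivity step, since both the reduction and the forward direction are formal. As a fallback I would note that injectivity can also be read off the asymptotic expansion of Theorem 4.1: because $1 - P(l,n) = n^{-(l-1)} + O(n^{-l})$, distinct values of $l$ yield complements of different orders of magnitude in $n$ and so cannot agree for all large $n$. I prefer the coupling argument because it yields the conclusion for every fixed $n \ge 2$ rather than only asymptotically. Finally, I would point out the degenerate case $n = 1$, where every lift is trivially connected so that all probabilities equal $1$ and the ``only if'' direction is vacuous; the statement is therefore understood for $n \ge 2$.
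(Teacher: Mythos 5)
Your proof is correct, and it is in fact more complete than the paper's own. The paper disposes of this theorem in two sentences: connectivity reduces to transitivity of the walk-subgroup (Proposition 4.2 / Lemma 4.3), the transitivity probability depends only on the number of generators $l$, and $l$ agrees for random $n$-lifts of homotopy equivalent graphs --- which establishes only the \emph{if} direction. The \emph{only if} direction, which you correctly identify as the entire content of the statement, requires injectivity of $l \mapsto P(l,n)$, and the paper never argues it; the best one could extract from its toolkit is the asymptotic expansion of Lemma 3.9, which separates $P(l,n)$ from $P(l',n)$ only for $n$ large, not at each fixed degree. Your coupling argument fills exactly this gap: monotonicity comes from the containment of transitivity events on a common sample space, and strictness from the event $\{\sigma_1 = \cdots = \sigma_l = \mathrm{id},\ \sigma_{l+1} \text{ an } n\text{-cycle}\}$, which has probability $(n!)^{-l}\, n^{-1} > 0$, so $P(l+1,n) > P(l,n)$ for every $n \ge 2$. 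This also handles the tree case $l = 0$ (where Theorem 4.1's hypothesis $l \ge 1$ fails but $P(0,n) = 0 < P(l,n)$ for $l \ge 1$), and your observation that the converse is false at $n = 1$ is a caveat the paper silently ignores. In short: you take the same reduction as the paper but supply the missing converse; the paper's version buys brevity at the cost of proving only half the stated equivalence, while your version buys a genuine proof valid at every fixed $n \ge 2$ rather than only asymptotically.
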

\begin{proof}
The probability that these lifts are connected are simply the probabilities that their walk-subgroups are subgroups of $\mathcal{S}_n$ which act transitively on $\{1,\dots, n\}$. These probabilities only depend on the number of generators for their respective walk-subgroups, which is the same for random $n$-lifts of homotopy equivalent graphs.
\end{proof}
\begin{thm}
	Let $G$ and $H$ be connected graphs. Then the probabilities that a random $n$-lift of $G $ and a random $n$-lift of $H$ have edge expansion bounded below by $\xi(G)$ and $\xi(H)$ respectively are equal if and only if $G$ is homotopy equivalent to $H$. 
\end{thm}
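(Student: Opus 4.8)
The plan is to mirror the proof of the connectivity invariant (Theorem 4.12) almost verbatim, replacing the event ``the lift is connected'' by the event ``the lift has edge expansion at least $\xi(G)$'' and invoking Theorem 4.5 in place of Theorem 4.1. The entire argument hinges on showing that the relevant probability depends on the base graph $G$ only through its Betti number $l = |E(G)| - |V(G)| + 1$, which (by the cited characterization of homotopy equivalence of connected graphs as equality of $|E| - |V|$) is a complete homotopy invariant: two connected graphs are homotopy equivalent if and only if their Betti numbers agree, i.e.\ their walk-subgroups have the same number of random generators.

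First I would recall from the proof of Theorem 4.5 that the probability $P_G$ that a random $n$-lift of $G$ has edge expansion at least $\xi(G)$ is squeezed between two quantities with a common asymptotic form. From below, whenever the walk-subgroup equals $\mathcal{S}_n$ or $\mathcal{A}_n$ it acts at least $(n-2)$-transitively, hence at least $n/3$-transitively for $n\ge 3$, so Proposition 4.6 forces expansion $\ge \xi(G)$; by Theorem 3.12 this lower event has probability $1 - \frac{1}{n^{l-1}} + O\!\left(\frac{1}{n^{l}}\right)$. From above, positive expansion implies connectedness, so $P_G$ is bounded by the connectivity probability, which by Theorem 4.1 is again $1 - \frac{1}{n^{l-1}} + O\!\left(\frac{1}{n^{l}}\right)$. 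Thus $P_G = 1 - \frac{1}{n^{l-1}} + O\!\left(\frac{1}{n^{l}}\right)$. The point I would stress is that this closed form depends on $G$ only through $l$, even though the threshold $\xi(G) = \varepsilon/|V(G)|$ and the expansion itself depend on finer data of $G$; the dependence on $|V(G)|$ sits entirely in the threshold and washes out once we pass to the walk-subgroup description.

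For the ``if'' direction I would note that $G \simeq H$ gives $l_G = l_H =: l$, so $P_G$ and $P_H$ are both computed by the same expression $1 - \frac{1}{n^{l-1}} + O\!\left(\frac{1}{n^{l}}\right)$ and therefore agree, despite $\xi(G)\neq\xi(H)$ in general. For the ``only if'' direction I would exploit injectivity of this expression in $l$: since $1 - P_G \sim n^{-(l_G-1)}$ and $1 - P_H \sim n^{-(l_H-1)}$, equality of the probabilities as functions of $n$ forces these leading corrections to coincide, whence $l_G = l_H$ and $G \simeq H$. The main obstacle is precisely the claim isolated in the second paragraph, namely that $P_G$ is governed by $l$ alone: one must verify that the threshold $\xi(G)$ is chosen so that the event ``expansion $\ge \xi(G)$'' is trapped between the two $l$-dependent walk-subgroup events uniformly over all connected graphs of Betti number $l$. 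This is exactly the content of the two-sided sandwich in Theorem 4.5, and it is the one place where the extra geometric data of $G$ must be shown to disappear; once it is granted, the injectivity step for the converse is routine.
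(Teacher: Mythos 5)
Your proposal is correct and takes essentially the same route as the paper: the paper's proof of this theorem is a one-line appeal to the fact that the relevant probability is governed by transitivity properties of the walk-subgroup and hence depends only on the number $l$ of random generators, which for connected graphs is a complete homotopy invariant. Your explicit sandwich between the at-least-$(n-2)$-transitivity event (Theorem 3.12 via Proposition 4.6) and connectivity (Theorem 4.1), plus the asymptotic injectivity in $l$ for the converse, is precisely the content the paper compresses into that remark when it cites the proof of Theorem 4.5.
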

\begin{proof}
Similar to the previous proof: this fact is implied by transitivity properties of the walk-subgroup which depend only on the number of generators, which is the same for random $n$-lifts of homotopy equivalent graphs.
\end{proof}

We will provide another perspective on homotopy invariants of random lifts. Any graph $G$ with $d = |E(G)| - |V(G)| + 1$ is homotopy equivalent to the bouquet of $d$-circles, $C_d$ which consists of a single vertex with $d$ loops. Lifts of $C_d$ are well studied, and also known as unions of permutations or random $2d$-regular multigraphs. A random lift of $C_d$ or a random $2d$-regular multigraph on $n$ vertices is obtained by choosing $d$ permutations $\sigma_1,\dots, \sigma_d$ independently and randomly from $\mathcal{S}_n$, and adding the edges $(j, \sigma_i(j))$ for all $i$ to the $n$ (initially isolated) vertices. Loops count as incoming and outgoing edges in such graphs. 
\begin{prop}
	A random 2d-regular multigraph $H$ is connected if and only if its walk-subgroup is a transitive subgroup of $S_n$. 
\end{prop}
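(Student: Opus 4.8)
The plan is to recognize this statement as the special case of Proposition 4.2 in which the base graph is the bouquet $C_d$ of $d$ circles, since the preceding discussion identifies a random $2d$-regular multigraph on $n$ vertices precisely with a random $n$-lift of $C_d$. One could therefore invoke Proposition 4.2 verbatim. I prefer, however, to spell out the direct argument, because $C_d$ has a single vertex and the correspondence becomes even more transparent than in the general lift setting.

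First I would fix the setup. The bouquet $C_d$ has one vertex and $d$ loops, so its spanning tree is the single isolated vertex and all $d$ loops are non-flat. By the structural description of walk-subgroups (Proposition 3.2), the walk-subgroup of $H$ is exactly $\langle \sigma_1, \dots, \sigma_d\rangle \le \mathcal{S}_n$. The key simplification is that, since the base has a single vertex, the vertex set of $H$ consists of one fiber, which I identify with $\{1,\dots,n\}$ -- precisely the set on which the walk-subgroup acts. This identification removes the need for the ``traverse the flat edges within a section'' step that appears in the proof of Proposition 4.2.

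For the forward direction, assume $H$ is connected. Then for any $i,j \in \{1,\dots,n\}$ there is a walk in $H$ from $i$ to $j$. Each edge of $H$ has the form $(a,\sigma_k(a))$, so traversing it forward applies $\sigma_k$ and traversing it backward applies $\sigma_k^{-1}$ (recall each loop counts in both directions); reading off the walk yields a word in the $\sigma_k^{\pm 1}$, that is, an element $\sigma$ of the walk-subgroup with $\sigma(i)=j$. As $i,j$ were arbitrary, $\langle\sigma_1,\dots,\sigma_d\rangle$ is transitive. For the converse, assume the walk-subgroup is transitive. Given $i,j$, choose $\sigma$ in the group with $\sigma(i)=j$ and express it as a word in the generators and their inverses; replaying this word as a sequence of edge traversals starting at $i$ produces a walk in $H$ ending at $j$, so every pair of vertices is joined by a walk and $H$ is connected.

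I do not anticipate a genuine obstacle. The entire content lies in setting up the dictionary between edge traversals in $H$ and letters $\sigma_k^{\pm 1}$ in walk-products, and in the observation that the single fiber lets the domain of the group action coincide with $V(H)$. The one point that warrants a word of care is the direction convention for loops -- each loop supplies both $\sigma_k$ and $\sigma_k^{-1}$ as admissible steps -- which is exactly the feature that makes the walk-subset an honest \emph{subgroup} rather than merely a walk-subset, and hence makes both implications go through cleanly.
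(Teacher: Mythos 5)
Your proposal is correct and matches the paper's proof in essence: the paper likewise first identifies the walk-subgroup of $H$ with $\langle\sigma_1,\dots,\sigma_d\rangle$ and then argues both directions by lifting associated walks (transitivity implies connectivity) and projecting walks to walk-products (connectivity implies transitivity). Your additional remarks --- the single-fiber identification of $V(H)$ with $\{1,\dots,n\}$ and the loop direction convention --- are accurate elaborations of the same argument rather than a different route.
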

\begin{proof}
First note that if $H$ is constructed using the permutations $\sigma_1,\dots, \sigma_d$ then the walk-subgroup of $H$ is indeed the subgroup generated by the $\sigma_1,\dots,\sigma_d$. This is clear since walks exist which traverse the loops of $C_d$ in all possible orders. Now suppose the walk-subgroup is transitive. Then to get to any $u$ from any $v$ in $H$, simply take the lift of a walk associated with $\sigma \in S_n$ such that $\sigma(u) = v$. Conversely, if $H$ is connected then for any $u$ and $v$, there is a walk from $u$ to $v$. The walk-product of the projection of such a walk gives element of the walk-subgroup such that $\sigma(u)=v$. 
\end{proof}
\begin{thm}

	A random 2d-regular multigraph is connected with probability $1 - \frac{1}{n^{d-1}} + O\left( \frac{1}{n^{d}} \right)$.

\end{thm}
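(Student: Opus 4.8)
The plan is to identify a random $2d$-regular multigraph on $n$ vertices with a random $n$-lift of the bouquet $C_d$ of $d$ loops, and then to reduce connectivity to a transitivity statement about the associated walk-subgroup exactly as in Theorem 4.1. Since $C_d$ has a single vertex and $d$ loops, its Betti number is $l = |E(C_d)| - |V(C_d)| + 1 = d$; a spanning tree of $C_d$ is empty, so none of the loops are flat and the walk-subgroup is precisely $\langle \sigma_1, \dots, \sigma_d \rangle \le \mathcal{S}_n$, where $\sigma_1, \dots, \sigma_d$ are the independent uniform permutations used to build the multigraph. By Proposition 4.14 the multigraph is connected if and only if this subgroup acts transitively on $\{1, \dots, n\}$, so the quantity to compute is exactly the probability that $d$ uniformly random permutations generate a transitive subgroup of $\mathcal{S}_n$.

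At this point the statement follows immediately from Theorem 4.1 applied with $l = d$, using the observation of Section 4.2 that the proofs of Theorems 4.1 and 4.5 never used simplicity of the base graph and so apply verbatim to the non-simple graph $C_d$. I would, however, prefer to present a short self-contained derivation, since it exposes where the leading coefficient comes from. The upper bound is handed to us by Lemma 3.9 with $l = d$: the probability that $\sigma_1, \dots, \sigma_d$ fail to generate a transitive group is at most $\sum_{1 \le r \le n/2} \binom{n}{r}^{1-d} = \frac{1}{n^{d-1}} + O(\frac{1}{n^{d}})$, the dominant contribution being the $r = 1$ term.

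The one ingredient not literally contained in the cited lemmas is the matching lower bound needed to fix the coefficient of $\frac{1}{n^{d-1}}$, and this is the step I expect to require the only genuine (if mild) work. The idea is that the leading obstruction to transitivity is the existence of a point fixed by all $d$ generators. Letting $X$ be the number of such common fixed points, independence gives $\mathbb{E}[X] = n \cdot n^{-d} = \frac{1}{n^{d-1}}$ and $\mathbb{E}[X(X-1)] = n(n-1)\big(n(n-1)\big)^{-d} = O(\frac{1}{n^{2d-2}})$, which is $O(\frac{1}{n^{d}})$ for $d \ge 2$. The Bonferroni bound $\Pr[X \ge 1] \ge \mathbb{E}[X] - \tfrac{1}{2}\mathbb{E}[X(X-1)]$ then yields $\Pr[X \ge 1] = \frac{1}{n^{d-1}} + O(\frac{1}{n^{d}})$, and since a common fixed point forces intransitivity this lower bound matches the upper bound above. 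Hence the intransitivity probability equals $\frac{1}{n^{d-1}} + O(\frac{1}{n^{d}})$, and the connectivity probability equals $1 - \frac{1}{n^{d-1}} + O(\frac{1}{n^{d}})$, as claimed.
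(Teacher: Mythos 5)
Your proof is correct and follows the same route as the paper, whose entire proof of this theorem reads ``Use Proposition 4.14 and Lemma 3.9'': identifying the multigraph with a random $n$-lift of $C_d$ (Betti number $l=d$, empty spanning tree, walk-subgroup $\langle \sigma_1,\dots,\sigma_d\rangle$), reducing connectivity to transitivity via Proposition 4.14, and invoking Lemma 3.9 with $l=d$ are exactly the paper's steps. Where you go beyond the paper is the matching lower bound: Lemma 3.9 as stated only bounds the intransitivity probability from \emph{above}, so the paper's one-line citation literally yields connectivity probability at least $1-\frac{1}{n^{d-1}}-O\left(\frac{1}{n^d}\right)$ rather than the stated exact asymptotic; the paper tacitly leans on Dixon's exact result, which it quotes only in prose in Section 3.4. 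Your second-moment computation --- $\mathbb{E}[X]=n^{1-d}$, $\mathbb{E}[X(X-1)]=\left(n(n-1)\right)^{1-d}=O\left(\frac{1}{n^{d}}\right)$ for $d\ge 2$, then Bonferroni --- is a clean, self-contained way to pin the coefficient of $\frac{1}{n^{d-1}}$, correctly identifying a common fixed point as the dominant obstruction to transitivity. The one fringe case you set aside is $d=1$, where your bound on $\mathbb{E}[X(X-1)]$ degenerates to a constant; there the claimed probability reduces to $O\left(\frac{1}{n}\right)$ and holds directly, since a single permutation generates a transitive subgroup iff it is an $n$-cycle (probability $\frac{1}{n}$) --- the paper's own chain of lemmas is equally silent on this case, so nothing is lost.
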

\begin{proof}
Use Proposition 4.14 and Lemma 3.9.
\end{proof}

\begin{prop}
	The edge expansion of a random 2d-regular multigraph whose walk-subgroup is a k-transitive subgroup of $S_n$ for $k\ge n/2$ is at least 1.
\end{prop}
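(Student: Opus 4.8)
The plan is to mirror the strategy of Proposition 4.6, but to exploit the fact that the base graph $C_d$ has a single vertex, so the lift $H$ consists of exactly one fiber $F$ of size $n$ and every subset $T \subseteq V(H)$ lies entirely inside this one fiber. Identifying the vertices of $H$ with their labels in $\{1,\dots,n\}$, fix any $T$ with $0 < |T| = t \le n/2$; I must show $E(T, T^c) \ge t$, since then $\phi(T) = E(T,T^c)/|T| \ge 1$ and taking the minimum over all eligible $T$ gives edge expansion at least $1$.

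First I would use the transitivity hypothesis to produce a single element of the walk-subgroup that moves all of $T$ off itself. Since $t \le n/2 \le k$, the walk-subgroup is in particular $t$-transitive, and since $|T^c| = n - t \ge n/2 \ge t$ there are at least $t$ available targets in $T^c$; hence there is a $\sigma$ in the walk-subgroup with $\sigma(T) \subseteq T^c$, that is, $\sigma(T) \cap T = \emptyset$.

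Next I would invoke the walk-lifting property. Choosing an associated walk of $\sigma$ (a sequence of loop-traversals of $C_d$ whose walk-product is $\sigma$), this walk lifts to $n$ edge-disjoint walks in $H$, the $j$-th starting at the vertex labelled $j$ and ending at the vertex labelled $\sigma(j)$. For each of the $t$ indices $j \in T$ we have $\sigma(j) \notin T$, so the $j$-th lifted walk begins inside $T$ and ends in $T^c$, and must therefore traverse at least one edge of $E(T,T^c)$. Because the $n$ lifts are edge-disjoint, the boundary edges obtained from distinct $j$ are distinct, yielding at least $t$ edges in $E(T, T^c)$, as required.

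The only delicate points, neither of which is a serious obstacle, are the following. First, one must check that $t$-transitivity together with $t \le n/2$ genuinely allows all of $T$ to be sent into $T^c$; this is exactly where the hypothesis $k \ge n/2$ is needed, in contrast to the weaker $k \ge n/3$ of Proposition 4.6, since here $t$ may be as large as $n/2$ and we insist on moving \emph{every} point of $T$ out rather than merely a constant fraction. Second, one must confirm edge-disjointness of the $n$ lifted walks so that the counted boundary edges are genuinely distinct, but this is immediate from the walk-lifting property quoted in Section 1.
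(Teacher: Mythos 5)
Your proposal is correct and takes essentially the same route as the paper: choose $\sigma$ in the walk-subgroup moving all of $T$ off itself, then count boundary crossings among the edge-disjoint lifts of an associated walk, one per point of $T$. In fact you state the key condition more carefully than the paper does --- the paper's proof asks only for $\sigma(t)\neq t$ for all $t\in T$, whereas your condition $\sigma(T)\subseteq T^c$ (available precisely because $k\ge n/2\ge |T|$ and $|T^c|\ge |T|$) is what the edge-counting argument actually requires.
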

\begin{proof}
Suppose the walk-subgroup is $n/2$-transitive, then for any set $T$ of size $\le n/2$, we may use the lift of the walk associated with element of the walk subgroup $\sigma$ such that $\sigma(t) \not= t$ for any $t \in T$ to show that there must be at least $\min(|T|, n/2)$ edges leaving it. In either case, the edge expansion has to be greater than one.\end{proof}
\begin{thm}
A random 2d-regular multigraph has edge expansion at least $1$ with probability $1 - \frac{1}{n^{d-1}} + O\left( \frac{1}{n^{d}} \right)$.
\end{thm}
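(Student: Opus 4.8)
The plan is to reduce the claim to Proposition 4.16 together with the reinterpreted form of Theorem 3.12, mirroring the proof of Theorem 4.5 but with Theorem 4.1 replaced by its multigraph analogue Theorem 4.15. First I would recall from the proof of Proposition 4.14 that the walk-subgroup of a random $2d$-regular multigraph built from permutations $\sigma_1,\dots,\sigma_d$ is exactly $\langle\sigma_1,\dots,\sigma_d\rangle$, so it is a subgroup of $\mathcal{S}_n$ generated by $d$ independent uniformly random permutations. Setting $l=d$ in the reinterpreted Theorem 3.12, these generators produce a subgroup acting at least $(n-2)$-transitively on $\{1,\dots,n\}$ with probability $1-\frac{1}{n^{d-1}}+O\!\left(\frac{1}{n^d}\right)$.

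For the lower bound on the probability, I would note that $n-2\ge n/2$ as soon as $n\ge 4$, so on the event just described the walk-subgroup is in particular $n/2$-transitive. Proposition 4.16 then guarantees that the multigraph has edge expansion at least $1$ on that same event, which yields the stated expression as a lower bound for the probability of edge expansion at least $1$.

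For the matching upper bound, I would argue exactly as in the proof of Theorem 4.5: an edge expansion of at least $1$ is strictly positive, and strictly positive edge expansion forces connectedness. Hence the event that the multigraph has edge expansion at least $1$ is contained in the event that it is connected, whose probability is computed in Theorem 4.15 to be $1-\frac{1}{n^{d-1}}+O\!\left(\frac{1}{n^d}\right)$. Since the lower and upper bounds coincide, the probability is exactly $1-\frac{1}{n^{d-1}}+O\!\left(\frac{1}{n^d}\right)$. I do not expect a genuine obstacle, as every ingredient is already in place; the only points requiring any care are the trivial inequality $n-2\ge n/2$ needed to invoke Proposition 4.16 and the containment of events used for the upper bound, so the argument is essentially a direct transcription of the proof of Theorem 4.5.
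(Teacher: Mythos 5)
Your proposal is correct and follows essentially the same route as the paper, whose entire proof reads ``Use Proposition 4.16 and Theorem 3.12.'' You have merely filled in the details the paper leaves implicit --- the $(n-2)$-transitivity reinterpretation of Theorem 3.12, the check that $n-2\ge n/2$, and the matching upper bound via connectedness and Theorem 4.15, exactly as in the proof of Theorem 4.5.
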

\begin{proof}
Use Proposition 4.16 and Theorem 3.12.
\end{proof}

In particular these results show that the homotopy invariants of random lifts can simply be studied as properties which hold with the same probability for random lifts of $C_d$ and $C_{d'}$ if and only if $d = d'$.

Though we have focused on Amit and Linial's model of random lifts obtained by uniform probability assignments from $\mathcal{S}_n$ to edges, it is possible to construct [restricted] models of random lifts through assignments from any group using any distribution.  Even in this general setting it is true that connectivity is a homotopy invariant in the manner described above.

\subsection{The Probability of Generating a Transitive Subgroup of $\mathcal{S}_{n_k}\wr \dots \wr \mathcal{S}_{n_1}$}
We use Theorem 4.11 to calculate the probability that $l$ random elements of $\mathcal{S}_{n_1} \wr \dots \wr \mathcal{S}_{n_k}$ produce a subgroup which acts transitively on $N_k \times \dots \times N_1$ where $N_i$ is [1,$n_i$]. This provides a generalization of Lemma 3.9 to wreath products of symmetric groups.

\begin{thm}
	The probability that l independently chosen permutations from  $\mathcal{S}_{n_k} \wr \dots \wr \mathcal{S}_{n_{1}}$ generate a subgroup of  $\mathcal{S}_{n_k} \wr \dots \wr \mathcal{S}_{n_{1}}$  which acts transitively on $N_k\times \dots \times N_1$ is
\[\left(1 -  \frac{1}{n_1^{l-1}} + O\left(\frac{1}{n_1^{l}}\right)\right)\prod_{i=2}^{k} \left(1 -  \frac{1}{n_i^{(l-1)(\prod_{j=1}^{i-1}n_j)}} + O\left(\frac{1}{n_i^{(l-1)(\prod_{j=1}^{i-1}n_j)+1}}\right)\right)\]
\end{thm}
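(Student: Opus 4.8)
The plan is to reduce this statement to Theorem 4.11 through the walk-subgroup correspondence, exactly as signalled in the opening of this subsection. First I would fix any connected graph $G$ with Betti number $l$ — for instance the bouquet $C_l$ of $l$ loops, for which $l = |E(G)| - |V(G)| + 1$. By the construction recalled in Section 3.3, the walk-subgroup of an iterated random $n_k \cdots n_1$-lift of $G$ is precisely the subgroup of $\mathcal{S}_{n_k} \wr \dots \wr \mathcal{S}_{n_1}$ generated by $l$ elements drawn independently and uniformly at random, one for each non-flat edge. Hence the walk-subgroup of this iterated lift has exactly the distribution of the subgroup generated by the $l$ random wreath-product elements appearing in the statement, and it suffices to compute the probability that such a walk-subgroup is transitive on $N_k \times \dots \times N_1$.

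Next I would establish the iterated analogue of Proposition 4.2: an iterated random lift of $G$ is connected if and only if its walk-subgroup acts transitively on $N_k \times \dots \times N_1$. The argument mirrors the one for ordinary lifts. Under the flat spanning tree assumption, the lifts of the tree edges join all vertices carrying a fixed index $(i_k, \dots, i_1)$, i.e. all vertices of a single section, so connectivity reduces to the ability to pass between sections. Passing from the section indexed by $(i_k, \dots, i_1)$ to the one indexed by $(j_k, \dots, j_1)$ is possible exactly when some $\sigma$ in the walk-subgroup satisfies $\sigma(i_k, \dots, i_1) = (j_k, \dots, j_1)$, because the associated walk in $G$ lifts to an edge-path realising this transition. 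Thus the lift is connected precisely when every index is reachable from every other, which is the statement that the walk-subgroup is transitive on the full product set.

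Combining the two observations, the probability that $l$ random elements of $\mathcal{S}_{n_k} \wr \dots \wr \mathcal{S}_{n_1}$ generate a subgroup transitive on $N_k \times \dots \times N_1$ equals the probability that the iterated random $n_k \cdots n_1$-lift of $G$ is connected. Theorem 4.11 gives this probability exactly, and since the displayed expression there is identical to the one in the present statement, the result follows immediately.

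I expect the main obstacle to be the careful justification of the middle step, namely that it is transitivity on the \emph{full} product $N_k \times \dots \times N_1$ — rather than some weaker level-by-level or sectionwise notion — that corresponds to connectivity. One must verify that the flat-tree reduction is legitimate in the wreath-product model (as established in Section 3.3), that the $l$ generators remain independent and uniform after this reduction, and that the faithful action of $\mathcal{S}_{n_k} \wr \dots \wr \mathcal{S}_{n_1}$ on $N_k \times \dots \times N_1$ defined in Section 3.3 is exactly the one whose orbits are the connected components of the lift. Once these points are checked, the remainder is a direct translation of the ordinary-lift arguments of Proposition 4.2, and no new probabilistic estimate beyond Theorem 4.11 is needed.
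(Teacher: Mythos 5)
Your proposal is correct and follows essentially the same route as the paper, whose proof consists precisely of generalizing Proposition 4.2 to the iterated-lift setting (connectivity of the lift is equivalent to transitivity of the walk-subgroup on $N_k \times \dots \times N_1$) and then invoking Theorem 4.11 to read off the probability. Your version simply spells out the details — the choice of a base graph such as $C_l$ with Betti number $l$, the flat-tree reduction, and the section-to-section walk argument — that the paper's two-line proof leaves implicit.
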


\begin{proof} Notice that Proposition 4.2 can be generalized for any permutation group acting transitively on its domain and the theorem follows from Theorem 4.11.\end{proof}

\section{Discussion}
Most known results from literature about random lifts give probability estimates of the form $1-o_n(1)$ where $n$ is the degree of the lift; we give actual rates of convergence to one using ideas from group theory. We focused on connectivity properties, but we think that studying random lifts through random permutations or random permutation groups could have wider applications in proving new results and improving known ones.

It would be interesting to further investigate homotopy invariants in random lifts, and we expect such homotopy based inheritance to exist for any random construction of topological covering spaces.

\begin{centering}
\textbf{Acknowledgements}\\
\end{centering}
I would like to thank Thomas Sauerwald, Alexander Makelov and Luke Kweku Abraham for their helpful comments and discussions, and St John's College for supporting me through a Benefactor's Scholarship for Research.

\end{document}